\newcommand{\R}{\mathbb{R}}
\newcommand{\s}{\mathbb{S}}
\newcommand{\col}{\mathcal{C}}
\newcommand{\F}{\mathcal{F}}
\newcommand{\G}{\mathcal{G}}
\newcommand{\h}{\mathcal{H}}
\newcommand{\N}{\mathcal{N}} 
\newcommand{\T}{\mathfrak{T}} 
\newcommand{\lls}{\mathfrak{L}}
\newcommand{\screen}{\mathcal{S}}
\newcommand{\SCREENS}{\mathfrak{S}}
\newcommand{\NORMALS}{\mathfrak{N}}
\newcommand{\NOST}{\mathfrak{X}}
\newcommand{\eps}{\varepsilon}
\newtheorem{theorem}       {Theorem}
\newtheorem{lemma}         {Lemma}
\newtheorem*{unthm}	{Theorem}
\newcommand{\dotp}[1]{\langle#1\rangle}
\thanks{\texttt{otfried@kaist.edu}. Dept.~of Computer Science, KAIST, Daejeon, South Korea.}
\thanks{\texttt{goaoc@loria.fr}. VEGAS Project, INRIA Nancy Grand Est, Nancy, France.}
\thanks{\texttt{andreash@tclab.kaist.ac.kr}. Dept.~of Computer Science, KAIST, Daejeon, South Korea.}
\begin{document}
\RRNo{6961}
\makeRR 

\section{Introduction} 
A straight line that intersects every member of a family $\F$ of compact 
convex sets in $\R^d$ is called a {\em line transversal} to $\F$. An 
important problem in {\em geometric transversal theory} is to give 
sufficient conditions on $\F$ that guarantee the existence of a 
transversal. As an example consider the following result due to Danzer 
\cite{Danzer57}.

\begin{unthm}[Danzer, 1957] A family $\F$ of pairwise disjoint congruent 
disks in the plane has a transversal if and only if every subfamily of 
$\mathcal F$ of size at most 5 has a transversal. \end{unthm}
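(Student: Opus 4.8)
The plan is to dispose first of the trivial direction and of the reduction to finitely many disks, and then to convert the whole problem into a metric statement about widths, which is where the congruence is used. The forward implication is immediate: any line transversal to $\F$ is also a transversal to every subfamily, so only the converse needs an argument. For the reduction to finite families I would use compactness. Writing a line as $\{x : \langle x, n(\theta)\rangle = s\}$ with $n(\theta) = (\cos\theta,\sin\theta)$, the set of lines meeting a radius-$r$ disk $B_i$ centered at $c_i$ is the band $T(B_i) = \{(\theta,s) : |\langle c_i, n(\theta)\rangle - s| \le r\}$, which is compact. Applying the finite case of the theorem to each finite subfamily shows every finite subfamily of $\F$ has a transversal, so the closed sets $T(B_i)$ have the finite-intersection property inside the compact set $T(B_{i_0})$ for one fixed disk, whence their total intersection is nonempty.

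Second, and this is the step that exploits congruence, I would reduce to a statement about the \emph{width} of the set of centers. For a fixed normal direction $\theta$, a line $\{\langle x, n(\theta)\rangle = s\}$ meets every disk iff $\langle c_i, n(\theta)\rangle \in [s-r, s+r]$ for all $i$, which is solvable in $s$ exactly when $\max_i \langle c_i, n(\theta)\rangle - \min_i \langle c_i, n(\theta)\rangle \le 2r$. Denoting this spread by $w(\theta)$, it follows that $\F$ has a transversal iff the centers fit in a slab of width $2r$, i.e.\ iff $\min_\theta w(\theta) \le 2r$. Since width is monotone under passing to subsets, the forward direction is recovered, and Danzer's theorem becomes the purely metric assertion: \emph{if the centers are pairwise at distance $>2r$ and every five of them fit in a slab of width $2r$, then all of them do.}

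Third, I would attack this slab statement by studying a \emph{minimal} non-slab-able subset $T$ of the centers — one not contained in any width-$2r$ slab, but every proper subset of which is — and prove $|T| \le 5$. A point lying in $\mathrm{conv}(T\setminus\{p\})$ would be redundant, so every point of $T$ is a vertex of the convex polygon $P = \mathrm{conv}(T)$; and minimality forces each vertex $p$ to be \emph{critically extreme}, meaning there is a direction at which $p$ is the unique extreme vertex realizing the width and at which removing $p$ drops the width to $\le 2r$ (at any other direction the removal leaves the width unchanged). The task reduces to showing that a convex polygon with pairwise $>2r$-separated vertices, width $>2r$ in every direction, and every vertex critically extreme, has at most five vertices, with the regular pentagon (the classical tight example) as the extremal case.

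The main obstacle is precisely this count. The plan here is to run the rotating-calipers correspondence: as $\theta$ sweeps $[0,\pi)$, the pair realizing $w(\theta)$ runs through the antipodal pairs of $P$, and each vertex is charged to the arc of directions on which it is the critical extreme vertex. Combining (i) the lower bound $2r$ on all widths, which forces each such arc to be short, (ii) the separation $>2r$, which forces consecutive vertices to subtend a definite turning angle, and (iii) the fact that the turning angles around $P$ sum to $2\pi$, should yield the bound five, with the regular pentagon showing the inequalities are tight. I expect the delicate part to be the degenerate configurations — non-unique minimal-width directions, parallel edges of $P$, and widths realized by edge–edge rather than vertex–edge contact — which will have to be handled by a perturbation or limiting argument so that the counting inequalities survive.
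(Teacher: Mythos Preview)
The paper does not prove Danzer's theorem. It is quoted in the introduction purely as historical background, with a citation to Danzer's 1957 paper, and serves only to motivate the paper's own results on the $d$-dimensional analogue. There is therefore no ``paper's own proof'' of this statement to compare against.

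As for the proposal itself: the compactness reduction and the translation into the width statement (a family of radius-$r$ disks has a common transversal iff the set of centers has width at most $2r$) are correct and standard, and the passage to a minimal convex polygon $P$ whose vertices are pairwise $>2r$ apart, whose width exceeds $2r$, and each of whose vertices is width-critical, is exactly the right setup. The genuine gap is your last paragraph. You assert that (i) each vertex's arc of critical directions is short, (ii) consecutive vertices subtend a definite turning angle, and (iii) summing to $2\pi$ forces at most five vertices --- but none of these three claims is made quantitative, and as stated they do not combine into a bound. In particular, ``short arc'' and ``definite turning angle'' pull in the same direction (both say that angles are bounded below), so summing them does not produce an \emph{upper} bound on the number of vertices; you need an inequality running the other way. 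Danzer's actual argument (and later cleaner versions) hinges on a more careful analysis of how the width drops when a vertex is deleted, and it is here that the separation hypothesis $>2r$ does real work, not merely via a turning-angle lower bound. Until that step is made precise your outline is a plan, not a proof.
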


Simple examples show that the disjointness or congruence can not be 
dropped, nor can the number 5 be reduced. Danzer's theorem has been very 
influential on geometric transversal theory. In 1958, Gr{\"u}nbaum 
\cite{grunbaum} showed that the same result holds when {\em congruent 
disks} is replaced by {\em translates of a square}, and conjectured that 
the result holds also for families of disjoint translates of an arbitrary 
planar convex body. This long-standing conjecture was finally proven by 
Tverberg \cite{tverberg} after partial results were obtained by Katchalski 
\cite{katchalski}.

\begin{unthm}[Tverberg, 1989] A family $\F$ of pairwise disjoint translates 
of a compact convex set in the plane has a transversal if and only if every 
subfamily of $\F$ of size at most 5 has a transversal. \end{unthm}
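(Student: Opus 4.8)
The plan is to prove that if every subfamily of $\F$ of size at most $5$ has a line transversal then $\F$ has one, the reverse implication being trivial. First I would reduce to the case of a \emph{finite} family by compactness. Parametrizing an oriented line in the plane by its direction $u\in\s^1$ and a signed offset $p\in\R$, so that the space of oriented lines is a cylinder, the set $T(C)$ of oriented lines meeting a bounded convex body $C$ is the compact region caught between two continuous graphs over $\s^1$, and it is invariant under reversal of orientation, hence descends to a compact subset $\bar T(C)$ of the M\"obius strip of unoriented lines; a line transversal of a subfamily $\G\subseteq\F$ is exactly a point of $\bigcap_{C\in\G}\bar T(C)$. Fixing any $C_0\in\F$, every transversal of $\F$ lies in the compact set $\bar T(C_0)$, so the finite intersection property reduces the problem to showing that every finite subfamily of $\F$ has a transversal, which I would prove by induction on its size, the cases of size at most $5$ being the hypothesis.

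For the inductive step I would invoke Hadwiger's transversal theorem: a finite family of pairwise disjoint convex sets in the plane has a line transversal as soon as there is a linear order $\prec$ on it for which every three of its members are met, in the order $\prec$, by a single oriented line. The problem then becomes, given a finite $\F'\subseteq\F$ all of whose $5$-element subfamilies have transversals, to construct such an order on $\F'$ --- and this is the one step that genuinely uses that the members are \emph{translates} of a fixed body $K$, since for arbitrary pairwise disjoint convex sets no finite Helly number exists. Writing $C=K+v_C$, the band $T(C)$ is obtained from $T(K)$ by adding to the offset coordinate the function $u\mapsto\dotp{v_C,u^\perp}$, which is the restriction of a linear functional; this forces the bands to be ``parallel'' and, in turn, forces the following structure: for any two disjoint translates $A,B$ the set of directions in which some oriented line meets $A$ before $B$ is an arc $\Gamma(A,B)\subseteq\s^1$, its partner $\Gamma(B,A)$ is the antipodal arc $-\Gamma(A,B)$, and $\Gamma(A,B)$ varies continuously with $A$ and $B$. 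Under this dictionary the hypothesis on small subfamilies becomes the nonemptiness of prescribed finite intersections of the arcs $\Gamma(\cdot,\cdot)$ and of the bands $\bar T(\cdot)$.

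The main obstacle --- and the technical heart of the argument --- is to show that this local information forces the pairwise relations $\Gamma(\cdot,\cdot)$ to assemble into a single linear order $\prec$ on $\F'$ that is simultaneously acyclic and realizable triple by triple, so that Hadwiger's theorem applies. I expect this to demand a delicate analysis of the possible configurations of at most five pairwise disjoint translates and of the ways their arcs $\Gamma(A,B)$ can overlap: one must rule out cyclic patterns among the candidate relations and then check that, once $\prec$ is fixed, each triple $A\prec B\prec C$ is indeed met in that order by some line. It is this case analysis that pins down the number $5$; and in view of the known configurations of five pairwise disjoint translates of a square with no transversal but with all four-element subfamilies transversal, $5$ cannot be replaced by $4$.
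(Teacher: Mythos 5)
This statement is quoted in the paper as background (Tverberg's theorem, cited as \cite{tverberg}); the paper itself gives no proof of it, so your proposal can only be judged on its own terms --- and on those terms it has a genuine gap. Your reductions are fine and standard: passing to finite families by compactness of the line space, and aiming to apply Hadwiger's transversal theorem \cite{h-uegt-57}, which indeed requires producing a linear order $\prec$ on $\F'$ such that every \emph{triple} is met by a line consistently with $\prec$. But that is exactly where the entire difficulty of the theorem sits, and your text does not carry out this step: you write that you ``expect this to demand a delicate analysis'' of how the arcs $\Gamma(A,B)$ can overlap and that one ``must rule out cyclic patterns,'' without exhibiting the argument. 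The hypothesis that every subfamily of size $5$ has \emph{some} transversal does not directly hand you order-consistent triples: different $5$-element (or $3$-element) subfamilies may only admit transversals meeting them in mutually incompatible orders, and showing that for disjoint translates these local orders can nevertheless be glued into one global order realizable triple by triple is precisely the content of Tverberg's long and intricate proof (and of Katchalski's earlier partial results \cite{katchalski}). Announcing this step as the ``technical heart'' and then deferring it means the proof is not there; everything surrounding it (the cylinder/M\"obius-strip parametrization, the observation that the bands of translates differ by restrictions of linear functionals, the antipodal arcs $\Gamma(A,B)$, the sharpness of $5$) is correct but routine.

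If you want to complete the argument along these lines you would need, at minimum: (i) a precise statement of what the $5$-wise (and $3$- and $4$-wise) transversal hypothesis implies about the arcs $\Gamma(A,B)$, (ii) a proof that the resulting relation has no cycles --- this is where configurations of up to five translates must be analyzed and where the constant $5$ is actually used --- and (iii) a verification that the acyclic relation extends to a total order whose triples are all realized by lines, so that Hadwiger's theorem applies. None of (i)--(iii) is established in your proposal, so as it stands it is an outline of a strategy rather than a proof.
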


In a different direction, Danzer's theorem was recetly generalized by the 
present authors together with S.~Petitjean \cite{cghp-hhtdus-08}. This is a 
higher-dimensional analogue of Danzer's theorem, and it solves a problem 
which dates back to Danzer's original article.

\begin{unthm}[Cheong-Goaoc-Holmsen-Petitjean, 2008] A family $\F$ of 
disjoint congruent balls in $\R^d$ has a transversal if and only if every 
subfamily of size at most $4d-1$ has a transversal.\end{unthm}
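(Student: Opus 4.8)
The plan is to reduce the statement to a bound on \emph{critical} families — families with no transversal all of whose proper subfamilies do have one — and then to exploit the special structure of line transversals to disjoint \emph{congruent} balls. The converse implication is trivial, so only the ``only if'' direction needs work. A compactness argument handles the passage to finiteness: since the balls are congruent, any sufficiently spread-out finite subfamily has a \emph{compact} set of line transversals (if no such subfamily exists, all balls lie near a common line and a transversal is produced directly), and these sets have the finite-intersection property, so if $\F$ has no transversal then some finite subfamily has none, and we may pass to a critical subfamily, still denoted $\F$. Writing $\F_i=\F\setminus\{B_i\}$, criticality provides for each $i$ a transversal $\ell_i$ of $\F_i$ that necessarily misses $B_i$. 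It then suffices to prove $|\F|\le 4d-1$ for every critical family.

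The core is the geometry of the $(2d-2)$-dimensional space $\Lambda$ of lines in $\R^d$ and of the sets $\Lambda(B)=\{\ell:\ell\cap B\neq\emptyset\}$; note $\Lambda(\G)=\bigcap_{B\in\G}\Lambda(B)$ for any $\G$. Two facts, both special to balls, tame these sets. (i) For a fixed direction $u\in\s^{d-1}$, a line of direction $u$ meets $B$ exactly when its orthogonal projection along $u$ lands in the projection of $B$, a ball in the hyperplane $u^\perp\cong\R^{d-1}$; hence the direction-$u$ transversals of $\F$ form an intersection of $(d-1)$-dimensional balls — a convex set, to which Helly's theorem in $\R^{d-1}$ applies — and all of them induce the \emph{same} order on $\F$. (ii) Congruence enters here: for each ordering $\sigma$ of a subfamily, the directions admitting a transversal compatible with $\sigma$ form (the trace on $\s^{d-1}$ of) a convex cone, and disjoint congruent balls realize only a bounded number of geometric permutations. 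Together, (i) and (ii) show that, up to a bounded number of pieces, $\Lambda(\F)$ and each $\Lambda(\F_i)$ are ``convex sections'': families over a convex base in the $(d-1)$-dimensional space of directions with convex $(d-1)$-dimensional fibres.

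For the critical family one then treats the bounded set of geometric permutations realized by the $\ell_i$ simultaneously, so that the $\Lambda(\F_i)$ become, over a common convex base of directions, convex sections that are all nonempty while their intersection — which contains $\Lambda(\F)$ — is empty. A Helly-type theorem for convex sections (whose Helly number is bounded in terms of the base dimension $d-1$ and the fibre dimension $d-1$), combined with the bound on geometric permutations, then forces $|\F|\le 4d-1$, contradicting $|\F|\ge 4d$. I expect the main obstacle to be precisely step (ii): proving that the cone of directions of order-$\sigma$ transversals is convex — this is where disjointness and congruence are both indispensable — and then carrying out the bookkeeping (number of geometric permutations, how the Helly number splits between base and fibre, corrections for lines ``at infinity'') so that the final constant is exactly $4d-1$ rather than something larger or with the wrong dependence on $d$. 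Reducing the dimension directly by projecting the configuration onto a hyperplane ``screen'' is tempting but not immediate, since projection destroys disjointness; keeping the fibred picture above seems to be the way to obtain an induction-free argument.
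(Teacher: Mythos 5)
You should first note that this paper does not prove the statement at all: it is quoted as background from \cite{cghp-hhtdus-08} (the present paper proves the matching \emph{lower} bound), so there is no in-paper proof to compare against. Judged on its own, your outline is a reasonable reconstruction of the general strategy of that cited work -- fibring the space of lines over the sphere of directions, convexity of fixed-direction transversals, cones of directions for a fixed order, and geometric permutations -- but as written it has genuine gaps that keep it from being a proof.

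Concretely: (a) your step (ii), the convexity of the set of directions of transversals meeting disjoint congruent balls in a prescribed order, is not a lemma to be supplied later but the central technical theorem of \cite{cghp-hhtdus-08} and its prerequisite \cite{bgp-ltdb-08}; it occupies most of that work, genuinely uses the radii being (nearly) equal, and flagging it as ``the main obstacle'' leaves the heart of the argument missing. (b) The ``Helly-type theorem for convex sections'' you invoke, with Helly number depending only on the base dimension and the fibre dimension, is false in that generality: already in $\R\times\R$, thin neighbourhoods of graphs of continuous functions have a convex projection and convex (interval) fibres, yet families of such sets have unbounded Helly number. What actually works -- and what the cited proof uses -- is that for a \emph{fixed} geometric permutation the set of order-respecting transversals is contractible (this is where (i) and (ii) enter), and intersections of such sets are again of this form, so a topological Helly/Hadwiger-type argument in the $(2d-2)$-dimensional space of lines yields the Hadwiger number $2d$ for a fixed order; no generic ``convex base, convex fibre'' Helly principle replaces this. (c) The passage from the fixed-order statement to the order-free Helly number, and the exact constant $4d-1$, is precisely what you defer as ``bookkeeping'': it requires the nontrivial theorem that sufficiently many disjoint congruent balls admit at most two geometric permutations, differing only by one adjacent swap, together with a gluing argument; a naive union bound over permutations does not produce $4d-1$. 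Until (a)--(c) are supplied, the proposal is a plausible research plan rather than a proof.
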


It should be noted that there are examples which show that Tverberg's 
theorem does not extend to dimensions greater than two \cite{hm-nhtst-04}. 
The theorem just stated provides an upper bound on the {\em Helly-number} 
for line transversals to disjoint congruent balls in $\R^d$. However, a 
missing piece in this particular line of research has been a matching lower 
bound, a problem which again dates back to Danzer's original article. The 
main result of this paper is the following.

\begin{theorem}
  \label{main:lower-bound}
  For every $d \geq 3$, there exists a family of disjoint congruent 
balls in~$\R^d$ which does not have a transversal but where every 
subfamily of size at most $2d-2$ has a transversal. \end{theorem}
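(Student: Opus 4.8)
\emph{Strategy.} We prove the theorem by a pinning-and-perturbation argument. First we build an auxiliary family $\F$ of $2d-1$ pairwise disjoint unit balls, all tangent to a fixed line $\ell_0$, with two properties: $\ell_0$ is the \emph{unique} line transversal of $\F$, yet for each ball $B\in\F$ the family $\F\setminus\{B\}$ has a whole neighbourhood, in the space of lines, of transversals containing $\ell_0$. Then we translate the center of each ball slightly \emph{away} from $\ell_0$; this destroys $\ell_0$ as a transversal of the full family while leaving each $(2d-2)$-subfamily with a transversal, and a compactness argument shows the perturbed full family has no transversal at all.

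\emph{Linear algebra near $\ell_0$.} Take $\ell_0$ to be the $x_d$-axis and identify a line near $\ell_0$ with a pair $(p,q)\in\R^{d-1}\times\R^{d-1}$: it meets $\{x_d=0\}$ at $(p,0)$ with direction $(q,1)$, so the space of lines near $\ell_0$ is $\R^{2d-2}$. For a unit ball with center $c$ put $f_c(p,q)=\mathrm{dist}(\ell,c)^2$, so the line meets the ball iff $f_c\le 1$. A ball tangent to $\ell_0$ has center $c=(n,h)$ with $n\in\s^{d-2}$, $h\in\R$, and a direct computation gives $\nabla f_{(n,h)}(0,0)=-2\,(n,\,h\,n)$; write $v=(n,hn)\in\R^{2d-2}$ for this gradient. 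Three consequences: if the gradient vectors $v_1,\dots,v_{2d-1}$ positively span $\R^{2d-2}$ then in every direction some $f_i$ increases past $1$, so $\ell_0$ is an isolated transversal and, as we shall see, no transversal reappears near $\ell_0$ after a small outward perturbation; if the horizontal parts $n_1,\dots,n_{2d-1}$ positively span $\R^{d-1}$ then the closed unit balls centered at the $n_i$, each passing through the origin, meet only in the origin, so $\ell_0$ is the only transversal \emph{parallel} to $\ell_0$; and a line whose direction is far from that of $\ell_0$ cannot stay within distance $1$ of centers that are spread far apart along $\ell_0$, which rules out the near-transverse lines.

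\emph{Construction and perturbation.} Choose the heights $h_1<\dots<h_{2d-1}$ with consecutive gaps larger than $2$; then any two tangent unit balls have centers at distance more than $2$ and are disjoint. For the directions, note that $\sum_i\lambda_i v_i=0$ with $\lambda_i>0$ is equivalent, writing $w_i=\lambda_i n_i$, to $\sum_i w_i=0$ and $\sum_i h_i w_i=0$: a homogeneous linear system of $2(d-1)$ equations in the $(2d-1)(d-1)$ coordinates of the $w_i$, whose solution space has dimension $(2d-3)(d-1)>0$, so a generic solution has all $w_i\neq 0$ and makes the $v_i$ span $\R^{2d-2}$. Set $n_i=w_i/|w_i|$, $\lambda_i=|w_i|$. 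Since the $v_i$ span $\R^{2d-2}$, the space of linear dependencies among them is one-dimensional, spanned by the all-positive vector $(\lambda_i)$; hence any $2d-2$ of the $v_i$ are linearly independent, and in particular for each $k$ one can solve $\langle v_i,u_k\rangle=1$ for $i\neq k$. Now translate the center of $B_i$ from $(n_i,h_i)$ to $((1+\eps_i)n_i,h_i)$ for small $\eps_i>0$: then $f_i(\ell_0)=(1+\eps_i)^2>1$, so $\ell_0$ misses every ball; the balls are still disjoint and their horizontal parts still positively span, so there is still no transversal parallel to $\ell_0$; and following the line $\ell_0+tu_k$ for $t$ in a suitable short interval drives $f_i$ below $1$ for all $i\neq k$ simultaneously, so $\F\setminus\{B_k\}$ still has a transversal.

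\emph{The main obstacle.} What remains — and what I expect to be the crux — is to show that the tangent family has \emph{no other} transversal, i.e.\ to rule out lines that are transverse to $\ell_0$ but not close to it. The estimates above confine such lines to a \emph{compact} set of lines (the slope is bounded because the ball centers are spread far along $\ell_0$, and the position is then bounded too), and the positive-spanning condition eliminates those near $\ell_0$; but eliminating the whole compact leftover for an arbitrary admissible choice of $(n_i,h_i)$ does not follow formally. I expect one must either commit to an explicit, symmetric choice of the directions and heights and verify by hand that no slanted line meets all the disjoint unit balls, or exploit the rigidity of line transversals to disjoint congruent balls — in particular the restrictions on the order in which a transversal meets the balls — to collapse the remaining cases onto those already treated. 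With that in hand $\ell_0$ is the unique transversal of the tangent family; since the perturbed balls are contained in the $\eps$-enlargements of the tangent balls and the lines meeting a fixed ball form a compact set, every transversal of the perturbed family lies in an arbitrarily small neighbourhood of $\ell_0$ once the $\eps_i$ are small enough — and there is none there. Hence the perturbed family has no transversal while each of its $(2d-2)$-subfamilies does, which proves the theorem.
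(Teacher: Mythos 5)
Your first-order set-up is sound and in fact parallels the paper's ``screen'' machinery: your gradient vectors $v_i=(n_i,h_in_i)$ are exactly the outer normals $\Phi(\lambda,n)$ of Lemma~\ref{lem:LLS-halfspace} in a different chart, positive spanning of the $v_i$ does give local isolation of $\ell_0$ and survives the outward $\eps$-perturbation, and your $u_k$-argument correctly gives each $(2d-2)$-subfamily a transversal after perturbation. But the proof has a genuine gap, and it is the one you yourself flag: you never establish that the tangent family of $2d-1$ balls can be chosen so that $\ell_0$ is its \emph{unique} transversal. This is not a finishing detail — it is the entire content of the theorem. Positive spanning only controls lines in a neighbourhood of $\ell_0$; your slope and parallel-line estimates leave a compact set of slanted candidate transversals untouched, and if the tangent family has a transversal $g\neq\ell_0$ passing through the interiors of the balls, then pushing the centers outward by small $\eps_i$ does not destroy $g$ (it may even push $g$ deeper inside), so the perturbed full family can still have a transversal and the argument collapses. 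Neither of your two suggested repairs is routine: the paper's final remarks state explicitly that actually constructing minimal pinnings of size $2d-1$ ``seems challenging'', i.e.\ the ``explicit symmetric choice, verify by hand'' route is precisely what the authors could not do, and the ``rigidity of transversals to disjoint balls'' route requires the nontrivial imported results (connectivity of order-respecting transversals, the $2d-1$ upper bound on minimal pinnings) that you do not invoke or prove.

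For comparison, the paper circumvents the need for an explicit $(2d-1)$-ball construction entirely: it first builds a \emph{stable} pinning by a possibly large finite family, by realizing a $\sigma_5$-pattern in every $3$-flat through $\ell$ and using compactness of the space of such flats; global uniqueness of $\ell$ (up to order) is then obtained from the connectedness of order-consistent transversals to disjoint balls, with the order controlled by separating hyperplanes orthogonal to $\ell$. It then shows (via the genericity of the screen normals, Lemmas~\ref{lem:linear-independance}--\ref{lem:generic-halfhyperplanes}) that pinnings by at most $2d-2$ balls are unstable, perturbs so that no small subfamily pins, and extracts a minimal pinning of size exactly $2d-1$ by invoking the known upper bound on minimal pinning size. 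Your proposal replaces all of this with a direct construction whose crucial global property is left unproved, so as it stands the argument is incomplete.
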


Thus the Helly-number for line transversals to disjoint unit balls in 
$\R^d$ is determined up to a factor of 2.

The crucial idea for the proof of Theorem \ref{main:lower-bound} is the 
notion of a {\em pinning}, which was also used in \cite{cghp-hhtdus-08}. 
Intuitively, a line transversal $l$ to a family $\F$ is {\em pinned} if 
every line $l'$ sufficiently close to, but distinct from $l$ fails to be 
transversal to $\F$. In \cite{cghp-hhtdus-08} we showed that if a line is 
pinned by a family $\F$ of disjoint balls then there is a subfamily 
$\G\subset \F$ of size at most $2d-1$ such that $l$ is pinned by $\G$. Here 
we will show that there exists {\em minimal pinning configuration} of 
disjoint (congruent) balls in $\R^d$ of size $2d-1$. By this we mean a 
family of $2d-1$ disjoint balls with a unique transversal $l$ which is 
pinned but where no proper subfamily pins $l$. Theorem 
\ref{main:lower-bound} then follows by slightly shrinking each member of 
the pinning configuration about its center.

There are many surveys that cover geometric transversal theory, among 
others \cite{dgk-htr-63, e-hrctt-93, gpw-gtt-93, w-httgt-04}. For detailed 
information on the transversal properties to 
families of disjoint balls the reader should consult \cite{g-s-08}.

\section{Existence of stable pinnings}

Let $\F$ be a family of compact convex sets in $\R^d$. The set $\T(\F)$ of 
all line transversals to $\F$ forms a
subspace of the affine Grassmanian, which is called the {\em space of
  transversals} to~$\F$. A set $\F$ \emph{pins} (or is a \emph{pinning} of) 
a line~$\ell$ if $\ell$ is an isolated point of~$\T(\F)$; we also say that 
$\ell$ is \emph{pinned} by $\F$, or that the pair $(\F,\ell)$ is a 
\emph{pinning configuration}. If $\F$ pins $\ell$ and no proper subset of 
$\F$ does, then $\F$ is a \emph{minimal} pinning of $\ell$. A minimal 
pinning configuration consisting of pairwise disjoint balls in~$\R^d$ has 
size at most $2d-1$~\cite{bgp-ltdb-08,cghp-hhtdus-08}. Our goal is to show 
that this constant is best possible in all dimensions.

\begin{theorem}
  \label{thm:lower-bound}
  For any $d \geq 2$, there exists a minimal pinning by $2d-1$
  disjoint congruent balls in~$\R^d$.
\end{theorem}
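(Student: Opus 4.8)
The plan is to fix a line $\ell_0$ --- say the $x_1$-axis --- and surround it by $2d-1$ congruent balls, all tangent to $\ell_0$, whose first-order effect on the space of transversals is precisely a system of $2d-1$ halfspaces in general position whose common intersection is $\{\ell_0\}$. Near $\ell_0$ the affine Grassmannian of lines is coordinatized by writing a nearby line as $\{(t,a+tb):t\in\R\}$ with $(a,b)\in\R^{d-1}\times\R^{d-1}=\R^{2d-2}$, so that $\ell_0$ is the origin. For a ball $B$ of radius $r$ tangent to $\ell_0$ at $pe_1$ and with center $pe_1+r(0,u)$, $u\in\s^{d-2}\subset\R^{d-1}$, expanding the squared distance from the center to $\mathrm{line}(a,b)$ gives $\mathrm{dist}^2(c,\mathrm{line}(a,b))-r^2=-2r\,\langle u,\,a+pb\rangle+O(\|(a,b)\|^2)$; hence the lines near $\ell_0$ that meet $B$ form the sublevel set of a smooth function whose gradient at $\ell_0$ is a negative multiple of the \emph{pinning normal} $\nu(B):=(u,pu)\in\R^{2d-2}$. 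The pinning normals that can occur are exactly the vectors $(u,pu)$ with $|u|=1$ and $p\in\R$.

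The core of the argument is the following criterion. Let $\F=\{B_1,\dots,B_m\}$ consist of balls tangent to $\ell_0$, set $\nu_j=\nu(B_j)$, and let $\Phi(a,b)=(\mathrm{dist}^2(c_j,\mathrm{line}(a,b))-r_j^2)_{j=1}^m$, a smooth map with $\Phi(0)=0$ for which $\T(\F)$ coincides near $\ell_0$ with $\Phi^{-1}\big((-\infty,0]^m\big)$. If $0$ lies in the interior of $\mathrm{conv}\{\nu_1,\dots,\nu_m\}$, then $\F$ pins $\ell_0$ and, in fact, the pinning is \emph{stable} (it persists, together with the nearby transversals of all subfamilies, under small perturbations of $\F$). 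If instead some index $k$ can be deleted so that $\{\nu_j\}_{j\ne k}$ is linearly independent, then $\F\setminus\{B_k\}$ does not pin $\ell_0$. For the first part, the hypothesis is equivalent by linear duality to: $D\Phi(0)$ is injective and $\mathrm{image}\,D\Phi(0)\cap(-\infty,0]^m=\{0\}$; a compactness argument over the unit sphere of directions at $\ell_0$ then promotes this to $\Phi^{-1}((-\infty,0]^m)=\{\ell_0\}$ on a neighbourhood, and since both conditions are open in $\F$, the pinning is stable. For the second part, linear independence of $\{\nu_j\}_{j\ne k}$ forces their dual cone to have nonempty interior, so there is a direction $w$ with $\langle\nu_j,w\rangle>0$ for all $j\ne k$; then $\mathrm{line}(tw)$ meets every $B_j$ ($j\ne k$) for all small $t>0$, giving a nontrivial arc in $\T(\F\setminus\{B_k\})$ through $\ell_0$.

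With the criterion in hand the construction is explicit. In $\R^{2d-2}=\R^{d-1}\times\R^{d-1}$ take the $2d-1$ pinning normals $(e_i,e_i)$ and $(e_i,-e_i)$ for $i=1,\dots,d-1$, together with $(-\mathbf 1/\sqrt{d-1},\,0)$, where $\mathbf 1=e_1+\dots+e_{d-1}$. Each is a valid pinning normal, of the form $(u,pu)$ with $p=1$, $p=-1$, $p=0$ respectively. From $(e_i,e_i)\pm(e_i,-e_i)$ one gets $(2e_i,0)$ and $(0,2e_i)$, so these $2d-1$ vectors span $\R^{2d-2}$; the identity $\sum_i(e_i,e_i)+\sum_i(e_i,-e_i)+2\sqrt{d-1}\,(-\mathbf 1/\sqrt{d-1},0)=0$ exhibits $0$ as a convex combination of them with all weights positive, so $0\in\mathrm{int}\,\mathrm{conv}$; and a short linear-algebra check shows every $2d-2$ of them are linearly independent. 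Realize these normals by placing, for each $i$, a ball tangent to $\ell_0$ at $(1,0,\dots,0)$ and one at $(-1,0,\dots,0)$ with normal direction $e_i$, and a last ball tangent at the origin with direction $-\mathbf 1/\sqrt{d-1}$; then perturb the tangency abscissae slightly so that all $2d-1$ of them are distinct, and take the common radius $r$ small enough for the balls to be pairwise disjoint. Both "$0\in\mathrm{int}\,\mathrm{conv}$" and "all $(2d-2)$-subsets independent" are open conditions, hence survive the perturbation. By the criterion $\F$ pins $\ell_0$ while $\F\setminus\{B_k\}$ does not for any $k$; and since every proper subfamily of $\F$ lies in some $\F\setminus\{B_k\}$ and $\T$ only grows when balls are removed, no proper subfamily pins $\ell_0$. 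Thus $\F$ is a minimal pinning by $2d-1$ disjoint congruent balls.

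The main obstacle is the forward direction of the criterion: upgrading the first-order (linear) picture to a genuine and robust pinning, i.e. showing that the curvature of the spheres cannot make the feasible set $\Phi^{-1}((-\infty,0]^m)$ leak away from $\ell_0$. This is precisely the \emph{stability} statement advertised by the section title, and it is the one place where a careful estimate is needed. The converse (minimality) direction is soft, since strict first-order feasibility already produces an honest arc of transversals. Everything else --- the explicit linear algebra for the $2d-1$ normals, and checking that general position can be reconciled with disjointness and congruence simultaneously --- is routine because all the relevant conditions are open.
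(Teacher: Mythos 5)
Your proposal is correct, but it takes a genuinely different route from the paper. The paper never proves a sufficiency criterion for pinning directly in $\R^d$: it establishes pinning in $\R^3$ via $\sigma_5$ halfplane patterns (an order-based argument that uses disjointness), lifts this to $\R^d$ by a compactness argument over the space of $3$-flats through $\ell$ (Theorem~\ref{thm:exists-stable-dD}), obtaining a large, non-explicit stable pinning; it then perturbs so that no subfamily of size at most $2d-2$ pins (the $\NOST_m$ empty-interior argument) and finally invokes the \emph{upper} bound of~\cite{cghp-hhtdus-08} --- minimal pinnings of disjoint balls have size at most $2d-1$ --- to extract a minimal pinning of size exactly $2d-1$ (Lemma~\ref{lem:exists-stable-minimal}). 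You instead prove, and exploit, the converse of the paper's Lemma~\ref{lem:not-pinning} under a strict hypothesis: if the first-order normals $(u_j,p_ju_j)$ positively span $\R^{2d-2}$ (equivalently $0\in\mathrm{int}\,\mathrm{conv}$), the family pins, and if the remaining $2d-2$ normals are linearly independent after deleting any one ball, that subfamily does not pin. I checked the pieces you flag or leave implicit: the gradient of $\mathrm{dist}^2(c_j,\cdot)-r^2$ at $\ell_0$ is indeed $-2r(u_j,p_ju_j)$ in the $(a,b)$-chart (which does cover a Grassmannian neighbourhood of $\ell_0$), and the curvature worry is settled by the pointwise Taylor estimate: by compactness there is $\delta>0$ so that every unit direction $w$ has some $j$ with $\langle\nu_j,w\rangle\le-\delta$, whence $\Phi_j(x)\ge 2r\delta|x|-C|x|^2>0$ for all small $x\ne 0$, so $\ell_0$ is isolated; the explicit normals $(e_i,\pm e_i)$, $(-\mathbf{1}/\sqrt{d-1},0)$ do positively span and every $2d-2$ of them are independent, and since the normals are independent of the radius, distinct tangency abscissae plus a small common radius give disjoint congruent balls while preserving both open conditions. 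What each approach buys: yours is self-contained and explicit --- it avoids the $\sigma_5$ analysis, the compactness over $3$-flats, the connectedness result, and, notably, the dependence on the upper-bound theorem of~\cite{cghp-hhtdus-08} used to cut the family down to $2d-1$; it even speaks to the remark in the final section that explicit constructions "seem challenging". The paper's route, on the other hand, produces extra structure (stability within every $3$-flat, and balls separable by hyperplanes orthogonal to $\ell$, hence a unique order-consistent transversal) that is reused in the proof of Theorem~\ref{main:lower-bound}; your configuration as stated yields Theorem~\ref{thm:lower-bound} directly, and with the tangency points spread along $\ell$ it could be arranged to provide that separation as well.
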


A pinning configuration~$(\F,\ell)$ consisting of disjoint balls
$B_{1},\dots, B_{n}$ in~$\R^{d}$ is \emph{stable} if there exists an
$\eps > 0$ such that any configuration $\F' =\{B'_{1},\dots,
B'_{n}\}$, where the center of $B'_{i}$ has distance at most~$\eps$
from the center of~$B_{i}$ and $B'_{i}$ is tangent to~$\ell$, is also
a pinning of~$\ell$.

\paragraph{Pinning patterns.}

A \emph{halfplane pattern} is a sequence $\h = (H_1, \dots, H_n)$ of
halfplanes in~$\R^{2}$ bounded by lines through the origin.  A
halfplane pattern is a \emph{pinning pattern} if no two halfplanes are
bounded by the same line, and if for every directed line $\ell$ not
meeting the origin and intersecting each halfplane there exist indices
$i < j$ such that $\ell$ exits~$H_j$ before entering~$H_i$.

We first observe that pinning patterns are invariant under small 
perturbations of the halfplanes (that is, if each halfplane is rotated
about the origin by a sufficiently small angle).  More precisely, two
halfplane patterns are equivalent with respect to the pinning pattern
property if the cyclic order of the halfplane boundaries and their
orientation is the same, or, equivalently, if the cyclic order of the
inward and outward normals of the halfplanes is identical.

\begin{figure}[ht]
\begin{center}
\includegraphics{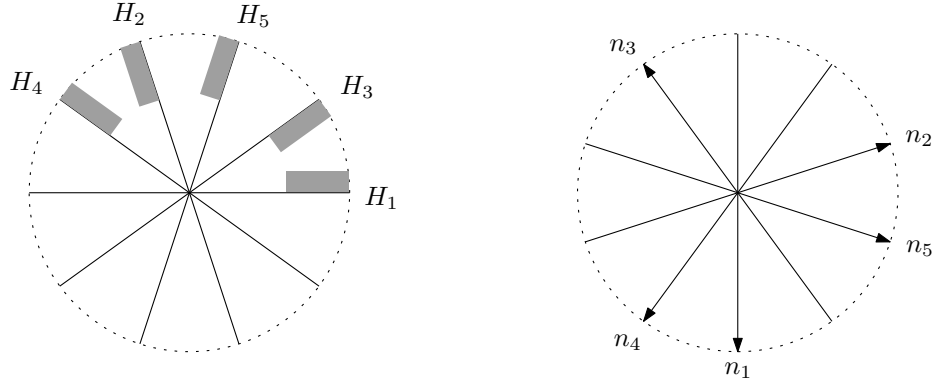}
\caption{\small A $\sigma_5$-patterns, as an arrangement of halfplanes
  through the origin (left) and as a cyclic order of outward and 
  inword normals on $\s^1$ (right).\label{sigma5}}
\end{center}
\end{figure}

Let $n_{i} \in \s^{1}$ denote the outward normal of~$H_{i}$
(throughout the paper, we let $\s^{d-1}$ denote the set of unit
vectors or, equivalently, directions in~$\R^{d}$).  We call a
halfplane pattern of five halfplanes a \emph{$\sigma_{5}$-pattern} if
the outward and inward normals appear in the order (see
Figure~\ref{sigma5})
\[ 
{n_1},-{n_3},{n_5},{n_2},-{n_4},-{n_1},{n_3},-{n_5},-{n_2},{n_4}.
\]
It is easy (but a bit tedious) to verify manually that any
$\sigma_{5}$-pattern is a pinning pattern.  We will give a somewhat
more elegant argument below, but let us first understand the
significance of this fact.

\paragraph{Stable pinnings from pinning patterns.}

The existence of a pinning pattern in the plane allows us to prove the
existence of a stable pinning of a line by five disjoint balls
in~$\R^3$.

Let $\col = (B_{1},\dots,B_{n})$ be a sequence of balls tangent to a
directed line~$\ell$, which touches the balls in the order $B_{1},
\dots, B_{n}$.  We choose a coordinate system where $\ell$ is the
positive $z$-axis.  Projecting ball $B_i$ on the $xy$-plane results in
a disk whose boundary contains the origin; we let $H_i$ denote the
halfplane (in the $xy$-plane) containing this disk and bounded by its
tangent in the origin. We call the halfplane pattern $\h = (H_1,
\dots, H_n)$ the \emph{projection} of~$\col$ along~$\ell$.

\begin{lemma}\label{lem:CP-pinning}
  Let $\col$ be a sequence of disjoint balls in~$\R^{3}$ touching a
  line~$\ell$ in the order of the sequence. If the projection of
  $\col$ along $\ell$ is a pinning pattern, then $\col$ is a pinning
  of~$\ell$.
\end{lemma}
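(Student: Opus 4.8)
The plan is to show that $\ell$ is an isolated point of $\T(\col)$ by a compactness argument that reduces everything to a two-dimensional statement about the pinning pattern $\h=(H_1,\dots,H_n)$. Choose coordinates so that $\ell$ is the $z$-axis, let $z_1<\dots<z_n$ be the heights at which $\col=(B_1,\dots,B_n)$ touches $\ell$, and let $n_i\in\s^1$ be the outward normal of $H_i$, so that $H_i=\{y:\langle y,n_i\rangle\le 0\}$ and $B_i$ lies in the slab $H_i\times\R$ with $\ell$ on its boundary. A line $\ell'$ near $\ell$ is recorded by its horizontal displacement $\delta'$, an affine map $\R\to\R^2$, with $\ell'=\ell$ corresponding to $\delta'\equiv 0$ and $\ell'\to\ell$ to $\delta'\to 0$. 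Since $\ell$ is tangent to $B_i$, a direct computation of the distance from the centre of $B_i$ to $\ell'$ shows that $\ell'$ meets $B_i$ if and only if $\langle n_i,\delta'(z_i)\rangle\le q_i'$, where $q_i'$ is quadratically small in $\delta'$. Let $C$ be the closed convex cone, inside the four-dimensional space of affine maps $\R\to\R^2$, of those $\delta$ with $\langle n_i,\delta(z_i)\rangle\le 0$ for all $i$.

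It suffices to prove $C=\{0\}$. Indeed, if $\ell$ were not isolated there would be transversals $\ell^{(k)}\to\ell$ with $\ell^{(k)}\neq\ell$; fixing a nondegenerate compact interval $I$ containing $0,z_1,\dots,z_n$ and setting $M_k=\max_{z\in I}\|\delta^{(k)}(z)\|>0$, one has $M_k\to 0$, and the rescaled affine maps $\delta^{(k)}/M_k$ are uniformly bounded, so along a subsequence they converge to an affine map $\bar\delta$ with $\max_I\|\bar\delta\|=1$, hence $\bar\delta\neq 0$. Dividing the inequalities $\langle n_i,\delta^{(k)}(z_i)\rangle\le q_i^{(k)}=O(M_k^2)$ by $M_k$ and letting $k\to\infty$ gives $\langle n_i,\bar\delta(z_i)\rangle\le 0$, i.e.\ $\bar\delta\in C$, contradicting $C=\{0\}$.

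The core is thus to show $C=\{0\}$. Suppose $\delta\in C$, $\delta(z)=a+zu$, $\delta\not\equiv 0$. If the line traced by $\delta$ avoids the origin (which forces $u\neq 0$), then $\delta$, oriented by increasing $z$, is a directed line avoiding the origin and meeting every $H_i$ (as $\delta(z_i)\in H_i$); the pinning pattern property yields $i<j$ with $\delta$ exiting $H_j$ before entering $H_i$, so $\delta(z)\in H_j$ only for $z$ at most the exit time and $\delta(z)\in H_i$ only for $z$ at least the entry time, and from $\delta(z_j)\in H_j$, $\delta(z_i)\in H_i$ and ``exit $<$ entry'' we get $z_j\le z_i$, contradicting $i<j$. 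Otherwise $\delta$ is constant ($u=0$, $a\neq 0$, $\langle n_i,a\rangle\le 0$ for all $i$) or vanishes at some $z^\ast$ ($u\neq 0$, $\delta(z)=(z-z^\ast)u$, so $\langle n_i,u\rangle\ge 0$ when $z_i<z^\ast$ and $\le 0$ when $z_i>z^\ast$); using that the $z_i$ are ordered, in every case there is $v\in\s^1$ and a threshold $m\in\{0,\dots,n\}$ with $\langle n_i,v\rangle\ge 0$ for $i\le m$ and $\langle n_i,v\rangle\le 0$ for $i>m$. Such a $v$ cannot exist for a pinning pattern: no two of the $H_i$ are bounded by the same line, so at most one index has $\langle n_i,v\rangle=0$; hence the halfplanes a directed line $\lambda$ of direction $v$ would \emph{enter} (those with $\langle n_i,v\rangle<0$) all have index $>m$, and those it would \emph{exit} (those with $\langle n_i,v\rangle>0$) all have index $\le m$. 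Taking $\lambda$ at small nonzero distance from the origin, on the side that puts $\lambda$ inside the at most one exceptional halfplane, yields a directed line avoiding the origin and meeting every $H_i$ for which no pair $i<j$ can satisfy ``$\lambda$ exits $H_j$ before entering $H_i$'' (that would require $i>m\ge j$) — contradicting the pinning pattern property.

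I expect the main obstacle to be the second part of the last step. A first-order analysis alone does not see the perturbation directions $\delta(z)=(z-z^\ast)u$ that pass through the origin, and ruling these out — showing that a pinning pattern forbids a threshold sign pattern on the normals — is exactly where the full strength of the pinning pattern property is needed, together with the fact that no two halfplanes share a boundary line. The displacement computation, the non-degenerate case, and the compactness reduction are comparatively routine.
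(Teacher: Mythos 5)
Your proof is correct, but it follows a genuinely different route from the paper's. The paper proves the stronger statement that \emph{no} line other than $\ell$ meets the balls in the order of the sequence: it projects a hypothetical such line $g$ onto the $xy$-plane, applies the exit-before-enter property of the pinning pattern to that projection, and splits into the cases $g$ skew to $\ell$, $g$ parallel to $\ell$, and $g$ meeting $\ell$, the last being reduced to the skew case by an explicit geometric perturbation of $g$; pinning then follows from the (implicit) fact that transversals sufficiently close to $\ell$ must meet the disjoint balls in the same order as $\ell$. You instead linearize the tangency constraints at $\ell$: the exact quadratic computation of the distance condition, together with the rescaling/compactness argument, reduces pinning to showing that the cone $C$ of first-order displacements $\delta(z)=a+zu$ with $\langle n_i,\delta(z_i)\rangle\le 0$ is trivial, and the pinning-pattern property is then applied to the displacement line itself rather than to the projection of an actual transversal. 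Your case split (displacement line missing the origin, constant, or through the origin) structurally mirrors the paper's (skew, parallel, secant), but the degenerate cases are handled differently: where the paper perturbs a secant transversal into a skew one, you extract a sign-threshold pattern on the normals and refute it directly with a line of direction $v$ off the origin --- an argument in the spirit of the ``only if'' direction of Lemma~\ref{lem:direction-space-characterization}. What your approach buys is that you never need the order-preservation fact for nearby transversals and you get, in effect, a quantitative (first-order) reason for isolation close to the stability statement of Lemma~\ref{lem:CP-stable-pinning}; what it costs is the tangency computation and the compactness step, which the paper's purely order-theoretic argument avoids. Two small points to tighten: spell out the distance computation (it does come out as $2r_i\langle n_i,\delta'(z_i)\rangle + \|\delta'(z_i)\|^2$ plus higher-order terms, so your claim holds), and in the case $\delta(z)=(z-z^\ast)u$ note explicitly that at most one tangency height can equal $z^\ast$, and that this unconstrained index sits exactly at the junction of the two sign blocks, so it can be assigned to either side and the threshold structure survives.
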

\begin{proof}
  We show that no line other than $\ell$ intersects the members of
  $\col = (B_1, \dots, B_n)$ in the same order, implying that $\ell$
  is pinned by~$\col$. Let $\h = (H_1, \dots, H_n)$ be the projection
  of $\col$, and assume that such a line~$g$ exists. If $g$ is neither
  parallel nor meets $\ell$, its projection $g'$ on the $xy$-plane
  does not go through the origin.  Since $g$ meets each $B_{i}$, $g'$
  intersects each halfplane~$H_{i}$.  Since $\h$ is a pinning pattern,
  there must then be indices $i < j$ such that $g'$ exits $H_j$ before
  entering~$H_i$.  But this implies that $g$ must intersect $B_{j}$
  before~$B_{i}$, a contradiction.

  If $g$ is parallel to $\ell$ then its projection on the $xy$-plane
  is a point lying in $\bigcap_{1 \leq i \leq n} H_i$. Since $\h$ is a
  pinning pattern, this intersection must have empty interior as
  otherwise any line pointing into the sector and not meeting the
  origin does not exit any halfplane; as no two halplanes in $\h$ are
  bounded by the same line, we get $\bigcap_{1 \leq i \leq n}
  H_i=\{O\}$, and so $g$ is $\ell$, a contradiction.

  If $g$ meets $\ell$, we argue that there exists a line that
  intersects the balls in the same order as $\ell$ and is neither
  parallel to nor secant with $\ell$, which brings us back to the
  first case above. Specifically, let $s_i$ be a segment joining a
  point in $B_i \cap \ell$ and a point in $B_i \cap g$ and $g_1$ a
  line through $g \cap \ell$ and the interior of one of the
  $s_i$. Since $g$ and $\ell$ intersect the balls in the same order,
  so does $g_1$. If no $s_i$ is reduced to a single point, $g_1$
  intersects the open balls and can be perturbed into the desired
  line. If some $s_i$ is reduced to a single point, that point is $g
  \cap \ell$ and $g_1$ meets every other segment in its interior; we
  can thus translate $g_1$ to (i) keep intersecting all balls other than
  $B_i$, (ii) move closer to the center of $B_i$ and (iii) stop
  intersecting $\ell$; this yields the desired line.
\end{proof}

In fact, we can strengthen the lemma as follows.
\begin{lemma}\label{lem:CP-stable-pinning}
  Let $\col$ be a sequence of disjoint balls in~$\R^{3}$ touching a
  line~$\ell$ in order of the sequence. If the projection of $\col$
  along $\ell$ is a pinning pattern, then $\col$ is a stable pinning
  of~$\ell$.
\end{lemma}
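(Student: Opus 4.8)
The plan is to upgrade Lemma~\ref{lem:CP-pinning} from a pinning statement to a \emph{stable} pinning statement. Recall that stability means: there is an $\eps > 0$ such that whenever each center is moved by at most $\eps$ and each ball is kept tangent to $\ell$, the perturbed configuration is still a pinning of $\ell$. The key observation is that the hypothesis of Lemma~\ref{lem:CP-pinning} — that the projection of $\col$ along $\ell$ is a pinning pattern — is \emph{robust}: as recorded just before Figure~\ref{sigma5}, the pinning-pattern property depends only on the cyclic order of the inward and outward normals on $\s^1$, and this combinatorial data is invariant under sufficiently small perturbations of the halfplanes. So first I would make precise that moving a center of $B_i$ by at most $\eps$, while keeping $B_i$ tangent to $\ell$ at the origin's preimage, rotates the tangent halfplane $H_i$ about the origin by an angle that tends to $0$ with $\eps$; this is a routine continuity/compactness argument using that the balls are disjoint (so the tangency points stay bounded away from each other and the radii are bounded).

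Next I would choose $\eps$ small enough that two things hold simultaneously for every admissible perturbation $\col' = (B_1', \dots, B_n')$. First, the balls $B_1', \dots, B_n'$ remain pairwise disjoint, and they are still touched by $\ell$ in the same order $B_1', \dots, B_n'$ (again by disjointness and the bounded-displacement bound, the tangency points on $\ell$ move continuously and keep their order). Second, the projection of $\col'$ along $\ell$ — which is the halfplane pattern $\h' = (H_1', \dots, H_n')$, where $H_i'$ is obtained from $H_i$ by a small rotation about the origin — has the same cyclic order of inward and outward normals as $\h$. By the perturbation-invariance of pinning patterns noted in the excerpt, $\h'$ is then again a pinning pattern. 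Both requirements are open conditions satisfied at $\eps = 0$, so they hold for all small enough $\eps$; taking the smaller of the two thresholds gives the desired $\eps$.

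With that choice of $\eps$, the conclusion is immediate: for any admissible perturbation $\col'$, the sequence $\col'$ consists of disjoint balls in $\R^3$ touched by $\ell$ in the order of the sequence, and its projection along $\ell$ is a pinning pattern, so Lemma~\ref{lem:CP-pinning} applies verbatim to $\col'$ and shows that $\col'$ pins $\ell$. Since this holds for every admissible $\col'$, the original configuration $\col$ is a stable pinning of $\ell$, which is exactly the assertion of Lemma~\ref{lem:CP-stable-pinning}.

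The only genuinely delicate point — the rest being soft continuity — is verifying that keeping $B_i'$ tangent to $\ell$ while displacing its center by at most $\eps$ really does perturb the halfplane $H_i$ only by a small rotation about the \emph{origin} of the $xy$-plane, rather than, say, translating its bounding line off the origin. This is forced by the setup: the projection of any ball tangent to the $z$-axis is a disk whose boundary circle passes through the origin, so its tangent line at the origin — the boundary of $H_i'$ — is always a line through the origin; only its direction can change, and that direction varies continuously (indeed smoothly) with the center of the ball. Making this dependence quantitative, so that ``center moves by $\le \eps$'' implies ``$H_i$ rotates by $\le \delta(\eps)$ with $\delta(\eps)\to 0$'', is the technical heart of the argument, and it is where I would spend most of the write-up.
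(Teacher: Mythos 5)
Your argument is correct and is essentially the paper's own proof: the paper likewise observes that moving a center while keeping the ball tangent to $\ell$ only rotates (or leaves unchanged) the corresponding halfplane about the origin, invokes the perturbation-invariance of pinning patterns, and then applies Lemma~\ref{lem:CP-pinning} to the perturbed configuration. You simply spell out more of the routine details (preservation of disjointness and of the tangency order, and the quantitative continuity of the rotation) than the paper's very terse write-up does.
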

\begin{proof}
  Consider moving the center of a ball $B_{i}$ in the
  collection~$\col$. In the projection $\h$, the halfplane $H_{i}$
  remains unchanged or rotates about the origin.
  Since we observed above that pinning patterns are invariant under
  sufficiently small rotations of each halfplane, the resulting
  collection is a pinning by Lemma~\ref{lem:CP-pinning}.  And so
  $\col$ is a stable pinning of~$\ell$.
\end{proof}

\paragraph{$\sigma_5$-patterns are pinning patterns.}

The following lemma characterizes pinning patterns.\footnote{The
  necessary condition is actually not used in this paper, and only
  included for completeness}  In addition to proving that
$\sigma_5$-patterns are pinning patterns, we have used a
higher-dimensional version of the sufficient condition to experimentally find
pinning patterns in~$\R^{3}$.
\begin{lemma}
  \label{lem:direction-space-characterization}
  A halfplane pattern $\h$ is a pinning pattern if and only if for any
  direction $u \in \s^1$ there exist indices $i<j<k$ such that $\{n_i,
  n_j, n_k\}$ positively span\footnote{The vectors $\{v_1, \ldots, v_k\}$ positively span the plane if any vector in $\R^2$ can be written as a linear combination of the $v_i$ with non-negative coefficients.} the plane, $\dotp{u,n_i} <0$, and
  $\dotp{u,n_k} >0$.
\end{lemma}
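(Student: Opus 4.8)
The plan is to unpack the definition of pinning pattern in terms of the geometry of a directed line $g'$ avoiding the origin, and translate the ``exits $H_j$ before entering $H_i$'' condition into a statement purely about the normals $n_i$. First I would set up coordinates: a directed line $g'$ not through the origin has a direction $v \in \s^1$ and a signed distance; the key parameter is the direction $v$, and in fact only the ``normal direction'' $u \in \s^1$ to $g'$ (say, $u$ the unit vector obtained by rotating $v$ by $+\pi/2$) matters for the combinatorics. For a fixed such line, $g'$ intersects $H_i$ iff the halfplane $H_i$ is not entirely on the ``far'' side of $g'$, and the order in which $g'$ enters/exits the $H_i$ is governed by comparing, along $g'$, the entry point and exit point of $g' \cap H_i$ — which is an interval (possibly a ray or all of $g'$) whose endpoints are determined by where $g'$ crosses the bounding line of $H_i$. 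The crucial observation is: $g'$ \emph{enters} $H_i$ at the point where it crosses $\partial H_i$ going from outside to inside, and this happens ``early'' (small coordinate along $g'$) or ``late'' according to the sign of $\dotp{v, n_i}$ — if $\dotp{v,n_i} > 0$ then $g'$ is leaving $H_i$ as it moves forward, so it \emph{exits} $H_i$ at the crossing point and was inside $H_i$ before; if $\dotp{v,n_i} < 0$ it enters $H_i$ at the crossing point.

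Next I would make precise the location of the crossing point. Since $\partial H_i$ passes through the origin with outward normal $n_i$, the line $g'$ (at signed distance $\rho$ from the origin in direction $u$) crosses $\partial H_i$ at a point whose coordinate along $g'$ is, up to a positive factor, $-\rho \dotp{u, n_i}/\dotp{v,n_i}$ (one computes this by writing a point of $g'$ as $\rho u + t v$ and solving $\dotp{\rho u + tv, n_i} = 0$). So for the ``forward'' half of the argument: given $u$, suppose $i<j<k$ with $\{n_i,n_j,n_k\}$ positively spanning, $\dotp{u,n_i}<0$, $\dotp{u,n_k}>0$; positive spanning forces $n_j$ to have a component making the three crossing-coordinates ordered so that $g'$ exits $H_k$ (where it becomes ``outside'' going forward) before it enters $H_i$ — here one has to be slightly careful about the case $\dotp{u,n_j}$ either sign, and about whether $g'$ actually meets all three halfplanes, which is where positive spanning is used again (it guarantees no halfplane is ``missed'' for $\rho$ small, or rather constrains $g'$ to meet them in the bad order). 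Conversely, if for some $u$ no such triple exists, one builds a directed line with normal direction $u$ and very small $|\rho|$ that meets every $H_i$ and enters/exits them in a monotone-compatible order, contradicting the pinning-pattern property; the non-existence of the triple is exactly what lets the entry/exit coordinates be arranged without a ``$j$ exits before $i$ enters'' inversion.

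The main obstacle I anticipate is the bookkeeping around degenerate and boundary cases: lines $g'$ that are parallel to some $\partial H_i$ (so $\dotp{v,n_i}=0$ and the interval $g'\cap H_i$ is a full ray or empty), lines for which some $\dotp{u,n_i}=0$ (the crossing point runs off to infinity), and ensuring that ``$g'$ meets every $H_i$'' is compatible with the chosen $u,\rho$ — this last point is why the statement quantifies over \emph{all} $u$ and why positive spanning, rather than mere linear spanning, is the right condition (a linearly spanning but not positively spanning triple would let $g'$ slip between the halfplanes). I would handle these by first proving the equivalence for ``generic'' $u$ (no normal orthogonal to $u$, $g'$ not parallel to any $\partial H_i$) where the formula $-\rho\dotp{u,n_i}/\dotp{v,n_i}$ is clean, and then arguing that both the pinning-pattern property and the right-hand condition are closed/open appropriately under perturbing $u$, so the generic case suffices. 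Finally, to deduce that $\sigma_5$-patterns are pinning patterns, I would simply check the right-hand condition: given the prescribed cyclic order $n_1,-n_3,n_5,n_2,-n_4,-n_1,n_3,-n_5,-n_2,n_4$ of the ten unit vectors $\pm n_i$ on $\s^1$, for any $u$ one reads off from the antipodal structure which $\dotp{u,n_i}$ are positive and which negative, and checks that among the indices $i$ with $\dotp{u,n_i}<0$ and $k$ with $\dotp{u,n_k}>0$ and $i<k$ one can always interpolate a $j$ with $\{n_i,n_j,n_k\}$ positively spanning — this reduces to a finite case check over the $10$ arcs cut out by the $\pm n_i$, which the cyclic pattern is rigged to pass.
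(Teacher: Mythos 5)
Your forward direction rests on the claim that, once $\{n_i,n_j,n_k\}$ positively span and the line enters $H_i$ and exits $H_k$, the crossing coordinates are forced so that $g'$ \emph{exits $H_k$ before entering $H_i$}. That claim is false: the line can perfectly well traverse the wedge $H_i\cap H_k$, entering $H_i$ before exiting $H_k$ (take $n_i$ and $n_k$ close to $\mp u$ and $n_j$ roughly orthogonal to $u$, slightly tilted, and a line passing through the wedge that meets $H_j$ far away). In that situation the required inversion involves the \emph{middle} halfplane $H_j$, so $j$ is not a ``slightly careful'' sign case but the heart of the argument. What positive spanning actually buys is $H_i\cap H_j\cap H_k=\{0\}$; since $g'$ misses the origin, $g'\cap H_j$ is disjoint from $g'\cap(H_i\cap H_k)$, and one then splits into three cases --- $g'$ misses the wedge (exits $H_k$ before entering $H_i$), crosses it before meeting $H_j$ (exits $H_k$ before entering $H_j$), or after (exits $H_j$ before entering $H_i$) --- and in each case the exiting index exceeds the entering index precisely because $i<j<k$. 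This trichotomy is the missing idea; the coordinate formula $t_i=-\rho\dotp{u,n_i}/\dotp{v,n_i}$ does not substitute for it. You also have the roles of $u$ and $v$ crossed: in the lemma $u$ is the \emph{direction} of the line, and ``enters $H_i$'' / ``exits $H_k$'' is exactly $\dotp{u,n_i}<0$ / $\dotp{u,n_k}>0$; with your convention ($u$ the normal of $g'$, $v$ the direction) those inequalities no longer mean that, so the ``forcing'' step is ill-posed as written. (Also, all crossing times scale linearly in $\rho$, so ``very small $|\rho|$'' changes neither which halfplanes are met nor the order of the crossings; only the side of the origin matters.)

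For the converse you merely assert that if the triple condition fails at $u$, a single line with that (normal) direction and small offset meets every $H_i$ in an inversion-free order; no mechanism is given for converting ``no positively spanning triple with these signs'' into a consistent arrangement of all entry/exit times, and that conversion is the entire content of this direction (you would also have to handle halfplanes whose boundary is parallel to the line, which are met only from one side of the origin, so the choice of side is constrained). The paper proves the implication directly and differently: it takes \emph{two} lines with direction $u$ on opposite sides of the origin, extracts an inversion pair from each, and runs a case analysis on the arrangement of the up to four halfplanes involved to exhibit a positively spanning triple whose smallest index has negative and largest index positive dot product with $u$; a single line cannot suffice for that route, since one inversion yields only two normals and two vectors never positively span the plane. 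This direction is flagged in the paper as unused, but it is part of the statement you set out to prove, so your proof needs to supply an argument of comparable substance. The finite case check you sketch for $\sigma_5$-patterns is fine in spirit (it is what Lemma~\ref{lem:sigma5} does), but it only needs the forward direction, which is exactly where your plan breaks.
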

\begin{proof}
  A directed line with direction $u$ exits halfplane $H_i$ if and only if
  $\dotp{u,n_i}>0$ and enters $H_{i}$ if and only if $\dotp{u,n_{i}} <
  0$.
 We first prove that the condition implies that $\h$ is a
pinning pattern. Let $g$ be a line not meeting the origin and meeting
each~$H_{i}$, let $u$ be its direction, and let $i < j < k$ be a
triple satisfying the conditions. Since $\{n_{i}, n_{j}, n_{k}\}$
positively span the plane, we have $H_{i}\cap H_{j}\cap H_{k}=\{0\}$.
As $g$ does not contain the origin, $g \cap H_j$ and $g \cap (H_i \cap
H_k)$ are disjoint. From $\dotp{u,n_i} <0$ and $\dotp{u,n_k} >0$, we
get that $g$ enters $H_i$ and exits $H_k$. We are thus in one of three
cases: (i) $g$ does not intersect $H_i \cap H_k$, and so exits $H_k$
before entering $H_i$, (ii) $g$ intersects $H_i \cap H_k$ before
$H_j$, and thus exits $H_k$ before entering $H_j$, or (iii) $g$
intersects $H_i \cap H_k$ after $H_j$, and thus exits $H_j$ before
entering $H_i$. In each of these cases, $g$ exits $H_u$ before
entering $H_v$ for some $u<v$. Since this holds for any line $g$ not
containing the origin, it follows that $\h$ is a pinning pattern.

We now prove the other implication. Assume that $\h$ is a pinning
pattern and let $u$ be a direction. We let $g_1$ and $g_2$ be two
lines with direction~${u}$ such that the origin lies in between these
two lines. Since $\h$ is a pinning pattern, there exist indices $a<b$
and $\alpha < \beta$ such that $g_1$ exits $H_{b}$ before
entering~$H_{a}$, and $g_{2}$ exits $H_{\beta}$ before
entering~$H_{\alpha}$. Assume first that no two elements in
$\{a,b,\alpha,\beta\}$ are equal and consider the arrangement of
$\{H_a,H_b,H_\alpha,H_\beta\}$; up to exchanging the roles of $g_1$
and $g_2$, we are in one of the situations (i)--(iii) depicted in
Figure~\ref{3suffice}. In each case, we give an unordered triple of
indices whose halfplanes have outer normals that positively span the
plane (or, equivalently, intersect in exactly the origin): 
\begin{itemize}
\item In situation~(i), the triple is $(a,\alpha,\beta)$ if $a<\alpha$
and $(\alpha,\beta,b)$ otherwise,
\item In
  situation~(ii), the triple is $(a,\beta,b)$ if $a<\beta$ and
  $(\alpha,\beta,b)$ otherwise, 
\item In situation~(iii), the triple
  is $(a,\alpha,\beta)$ if $a<\alpha$ and $(\alpha,a,b)$
  otherwise.
\end{itemize}

\begin{figure}[htb] \begin{center}
 \includegraphics[width=11cm,keepaspectratio]{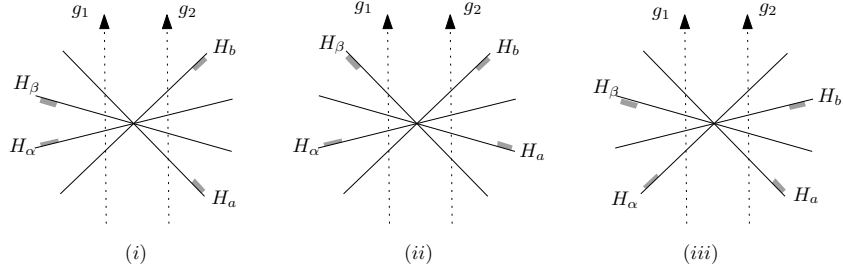} \caption{\small The three possible
 situations for $H_a, H_b, H_\alpha$ and $H_\beta$.\label{3suffice}}
\end{center} \end{figure}

The smallest element must belong to $\{a,\alpha\}$ and the largest to
$\{b,\beta\}$. Since $g_1$ and $g_2$ enters (resp. exit) $H_a$ and
$H_\alpha$ (resp. $H_b$ and $H_\beta$), it follows that $u$ makes a
negative (resp. positive) dot product with the outer normal of the
halfplane with lowest (resp. highest) index; this implies the
condition. 

Consider now the case where $\{a, \alpha, b, \beta\}$ are not all
distinct. Since $g_2$ exits $H_b$ \emph{after} entering $H_a$, at
least three elements of $\{a,b,\alpha,\beta\}$ are pairwise distinct;
for the same reasons as above, this triple of indices satisfies the
condition.
\end{proof}
\begin{lemma}\label{lem:sigma5}
  Any $\sigma_5$-pattern is a pinning pattern.
\end{lemma}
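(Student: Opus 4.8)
The plan is to apply the characterization of Lemma~\ref{lem:direction-space-characterization}: it suffices to check that for every direction $u \in \s^1$ there are indices $i<j<k$ whose outward normals $\{n_i, n_j, n_k\}$ positively span the plane and satisfy $\dotp{u,n_i} < 0$ and $\dotp{u,n_k} > 0$. By the perturbation invariance observed just before Figure~\ref{sigma5}, we may work purely with the cyclic order of the ten (inward and outward) normals on $\s^1$ given in the definition of a $\sigma_5$-pattern, namely
\[
n_1,-n_3,n_5,n_2,-n_4,-n_1,n_3,-n_5,-n_2,n_4,
\]
and we may assume none of them is parallel to $u$ or $-u$, since the condition for a given $u$ is open and the boundary cases can be obtained by a limiting argument (or handled separately).

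The key observation is that a triple $\{n_i,n_j,n_k\}$ positively spans the plane precisely when the three vectors do not all lie in a common closed halfplane, i.e.\ when, reading around $\s^1$, between any two consecutive ones of the three there is an arc of less than~$\pi$; equivalently, each of the three antipodes $-n_i,-n_j,-n_k$ lies strictly inside the arc spanned by the other two of $n_i,n_j,n_k$. Inspecting the cyclic word above, one checks that the relevant positively spanning triples include $\{n_1,n_2,n_5\}$ (indeed $-n_1$ lies between $n_2$ and $n_5$, $-n_2$ between $n_5$ and $n_1$ going the other way, etc.), as well as $\{n_1,n_3,n_5\}$, and their index-shifted analogues. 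Now fix $u$: the line $\dotp{u,\cdot}=0$ splits $\s^1$ into an open ``negative'' halfcircle $N_u = \{v : \dotp{u,v}<0\}$ and a ``positive'' one $P_u$, and the ten normals are distributed among them. One shows that among $\{n_1,n_2,n_3,n_4,n_5\}$ and their negatives, arranged in this cyclic pattern, one can always locate a positively spanning triple $\{n_i,n_j,n_k\}$ with $i<j<k$, $n_i \in N_u$, $n_k \in P_u$ — the point being that the $\sigma_5$ pattern is designed so that for each of the five ways the halfcircle boundary can cut the cyclic order (up to symmetry), such a triple is available. Concretely I would go through the cyclic word and, for each position of the cut, exhibit the witnessing triple; the five-fold rotational symmetry of the pattern (the map $n_i \mapsto n_{i+2}$, indices mod~$5$, together with the antipodal symmetry $n_i \mapsto -n_i$) should cut the case analysis down to essentially one or two cases.

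The main obstacle is the bookkeeping: one must verify that the chosen triples are simultaneously (a) positively spanning — a geometric fact about the fixed cyclic order, independent of $u$ — and (b) correctly ordered with the sign constraints dictated by where $u$'s separating line falls. The order constraint $i<j<k$ refers to the original indices $1,\dots,5$, not to the cyclic position of the normals, so some care is needed to pair up each configuration of $N_u$ versus the ten normals with a triple meeting all three requirements; this is exactly the ``easy but a bit tedious'' verification the text alludes to, and exploiting the symmetry is what keeps it manageable. An alternative, slightly slicker route — which I would mention but not necessarily carry out — is to note that the $\sigma_5$ pattern is (a perturbation of) the projection of five equally spaced points on a helix, or of the five ``diagonals'' configuration, and to read off the positively spanning triples from that geometric picture; but the direct combinatorial check via Lemma~\ref{lem:direction-space-characterization} is the cleanest thing to write down.
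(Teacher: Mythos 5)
Your overall strategy is the paper's: invoke Lemma~\ref{lem:direction-space-characterization}, exhibit positively spanning triples, and check that every direction $u \in \s^1$ is witnessed by one of them. But the one place where you commit to specifics is wrong, and the rest is deferred. Placing the ten vectors of the cyclic word at consecutive multiples of $36^\circ$ (which is legitimate, since only the cyclic order matters), the outward normals sit at $n_1 = 0^\circ$, $n_5 = 72^\circ$, $n_2 = 108^\circ$, $n_3 = 216^\circ$, $n_4 = 324^\circ$. Then $\{n_1,n_2,n_5\}$ does \emph{not} positively span: all three lie in the closed halfplane spanned by the arc $[0^\circ,108^\circ]$, and $-n_1$ (at $180^\circ$) is not in the cone generated by $n_2$ and $n_5$ (your ``between $n_2$ and $n_5$'' reads the long arc of the ten-letter word, not the positive cone, which is where your own criterion is misapplied). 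The positively spanning triples of a $\sigma_5$-pattern are exactly $\{1,2,3\}$, $\{1,3,5\}$, $\{2,3,4\}$ and $\{3,4,5\}$ --- note every one of them contains $n_3$, so there are no ``index-shifted analogues'': the configuration's order-$10$ symmetry (rotation by $36^\circ$) permutes the $n_i$ only up to sign changes and does not respect the linear order $i<j<k$ required by Lemma~\ref{lem:direction-space-characterization}, so the symmetry reduction you propose does not go through as stated.

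The missing content is precisely the covering check that the paper does: for each of the four triples $i<j<k$ above, the set of directions $u$ with $\dotp{u,n_i}<0$ and $\dotp{u,n_k}>0$ is an open arc (the intersection of two open halfcircles), and one verifies that the four arcs --- for the placement above they are $(126^\circ,270^\circ)$, $(90^\circ,162^\circ)$, $(234^\circ,18^\circ)$ and $(342^\circ,126^\circ)$ --- cover all of $\s^1$; openness of the covering also disposes of your boundary-case worry without any limiting argument. As written, your proposal neither identifies the correct triples nor performs this covering verification, so it does not yet constitute a proof.
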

\begin{proof}
  There are four triples of indices of outer normals in a
  $\sigma_5$-pattern that positively span the plane: $\{1,2,3\}$,
  $\{1,3,5\}$, $\{2,3,4\}$ and $\{3,4,5\}$. Figure~\ref{proofsigma5}
  shows, for each triple, the interval of directions that enter the
  first and exit the last member. The union of these (open) intervals
  covers~$\s^1$.  By Lemma~\ref{lem:direction-space-characterization},
  such a pattern is a pinning pattern.
  
\begin{figure}[ht]
    \begin{center}
      \includegraphics{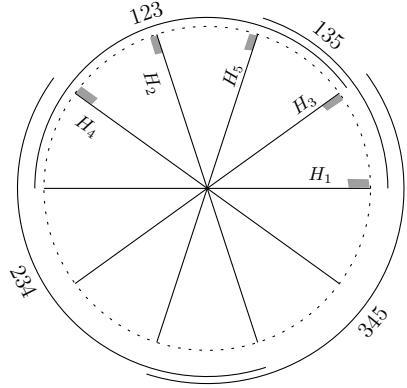}
      \caption{\small A $\sigma_5$-pattern satisfies the condition of
        Lemma~\ref{lem:direction-space-characterization}.\label{proofsigma5}} 
    \end{center}
  \end{figure}
\end{proof}

Combining Lemmas~\ref{lem:CP-stable-pinning} and~\ref{lem:sigma5}
we obtain:
\begin{theorem}\label{lem:exist-stable-3D}
  There exist sequences of five disjoint congruent balls in~$\R^{3}$
  that are stable pinnings.
\end{theorem}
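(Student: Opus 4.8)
The plan is to deduce this directly from Lemmas~\ref{lem:CP-stable-pinning} and~\ref{lem:sigma5}. By Lemma~\ref{lem:CP-stable-pinning} it suffices to exhibit a sequence $\col = (B_1, \dots, B_5)$ of five pairwise disjoint congruent balls in $\R^3$, all tangent to a common line $\ell$, touching $\ell$ in the order of the sequence, and whose projection along $\ell$ is a $\sigma_5$-pattern: Lemma~\ref{lem:sigma5} then says this projection is a pinning pattern, and Lemma~\ref{lem:CP-stable-pinning} says $\col$ is a stable pinning of $\ell$.

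To construct such a $\col$, I would fix coordinates so that $\ell$ is the oriented $z$-axis. For an angle $\theta \in \s^1$ and a height $h \in \R$, let $B(\theta, h)$ be the unit ball centered at $(\cos\theta, \sin\theta, h)$. Its center is at distance exactly $1$ from $\ell$, so $B(\theta,h)$ is tangent to $\ell$ at $(0,0,h)$, and a family of such balls touches $\ell$ in the order of their heights. The orthogonal projection of $B(\theta,h)$ onto the $xy$-plane is the unit disk centered at $(\cos\theta,\sin\theta)$ through the origin, so the corresponding halfplane of the projection pattern is $\{x \in \R^2 : \dotp{x,(\cos\theta,\sin\theta)} \ge 0\}$, with outward normal $-(\cos\theta,\sin\theta)$. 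Hence the projection of a family of such balls is controlled by two independent data: the positions on $\s^1$ of the angles $\theta_i$, which fix the normals $n_i$, and the order of the heights $h_i$, which fixes the indexing of the pattern.

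I would then pick five angles $\theta_1, \dots, \theta_5$ --- for instance those corresponding to the halfplanes in Figure~\ref{sigma5} --- so that, writing $n_i = -(\cos\theta_i,\sin\theta_i)$, the ten vectors $\pm n_i$ occur on $\s^1$ in the cyclic order $n_1,-n_3,n_5,n_2,-n_4,-n_1,n_3,-n_5,-n_2,n_4$; such a choice exists because a $\sigma_5$-pattern only constrains the cyclic order of these ten directions (five of which are the antipodes of the other five). Next I would pick heights $h_1 < h_2 < \dots < h_5$ with $h_{i+1} - h_i > 2$ and set $B_i = B(\theta_i, h_i)$. By construction $\col = (B_1, \dots, B_5)$ is a sequence of congruent balls tangent to $\ell$ in the order of the sequence whose projection along $\ell$ is a $\sigma_5$-pattern; and the $B_i$ are pairwise disjoint since, for $i \ne j$,
\[
\bigl\|(\cos\theta_i,\sin\theta_i,h_i)-(\cos\theta_j,\sin\theta_j,h_j)\bigr\|^2 = 2\bigl(1-\cos(\theta_i-\theta_j)\bigr) + (h_i-h_j)^2 \ \ge\ (h_i-h_j)^2 \ >\ 4 .
\]
Lemmas~\ref{lem:sigma5} and~\ref{lem:CP-stable-pinning} then finish the argument. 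I do not expect any real obstacle: the only care needed is the bookkeeping linking a ball's angular position $\theta_i$ to the outward normal $n_i$ of the halfplane it projects to --- and hence choosing the $\theta_i$ and the orientation of $\ell$ consistently with the $\sigma_5$ order --- after which disjointness comes for free, since the heights may be spread out arbitrarily.
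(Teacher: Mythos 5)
Your proposal is correct and follows exactly the paper's route: the theorem is obtained by combining Lemmas~\ref{lem:CP-stable-pinning} and~\ref{lem:sigma5}, and your explicit construction (unit balls at distance $1$ from the axis, angles realizing the antipodally consistent $\sigma_5$ cyclic order, heights spread by more than $2$) just fills in the routine realization that the paper leaves implicit. No gaps.
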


\paragraph{Higher dimensions.}

We now show the existence of stable pinnings by finite families of
disjoint balls in arbitrary dimension.

\begin{theorem}\label{thm:exists-stable-dD}
  For any $d \geq 2$, there exists a stable pinning of a line by
  finitely many disjoint congruent balls in~$\R^d$. 
\end{theorem}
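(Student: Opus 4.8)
The plan is to induct on~$d$, reducing the $d$-dimensional statement to the $2$- and $3$-dimensional ones (the latter being exactly Theorem~\ref{lem:exist-stable-3D}). The inductive engine will be a \emph{product} construction: given a stable pinning of a line by disjoint congruent balls in~$\R^{d_1}$ and one in~$\R^{d_2}$, \emph{both pinning the same line~$\ell$}, I will build a stable pinning of~$\ell$ by disjoint congruent balls in~$\R^{d_1+d_2-1}$. Applying this with $d_2=2$ — using a hand-built three-disk stable pinning in the plane as the $\R^2$ ingredient — turns a stable pinning in~$\R^{d_1}$ into one in~$\R^{d_1+1}$, so together with the base cases $d=2$ and $d=3$ this covers all $d\ge 2$.

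For the $\R^2$ base case I would take three disjoint congruent disks tangent to~$\ell$: two touching~$\ell$ on one side at $x=\pm1$ and one touching it on the other side at $x=0$, all of common radius $r<1$. A line close to~$\ell$ touching all three is forced to be~$\ell$, because the three tangency conditions, linearized at~$\ell$, already cut out only~$\ell$ among the nearby lines, while lines parallel to or meeting~$\ell$ are excluded exactly as in the proof of Lemma~\ref{lem:CP-pinning}; stability is immediate since perturbing a disk only perturbs its linearized constraint slightly.

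For the product step, embed $\R^{d_1}$ and $\R^{d_2}$ in $\R^{d_1+d_2-1}$ so that $\R^{d_1}\cap\R^{d_2}=\ell$ and the normal directions to~$\ell$ in the two factors are mutually orthogonal; rescale so that all balls are congruent, and slide the second family far along~$\ell$ so that the union~$\F$ is pairwise disjoint (every ball of~$\F$ is tangent to~$\ell$, so disjointness follows once the tangency points are pairwise far apart). Let $\pi_i$ be the orthogonal projection onto~$\R^{d_i}$. If $g$ is a line transversal to~$\F$ and close to~$\ell$, then $\pi_1(g)$ is a line close to~$\ell$ in~$\R^{d_1}$ meeting every ball of the first family (projection does not increase distances, and these balls are centered in~$\R^{d_1}$), hence $\pi_1(g)=\ell$ because that family pins~$\ell$; similarly $\pi_2(g)=\ell$; but $\pi_1^{-1}(\ell)=\R^{d_2}$ and $\pi_2^{-1}(\ell)=\R^{d_1}$, so $g\subseteq\R^{d_1}\cap\R^{d_2}=\ell$, i.e.\ $g=\ell$. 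Thus $\F$ pins~$\ell$.

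The main obstacle is stability of this product. A perturbation of~$\F$ inside the ambient space that keeps every ball tangent to~$\ell$ need not keep the first family's balls inside~$\R^{d_1}$, so $\pi_1$ of the perturbed first family consists of balls whose centers are only $\eps$-close to the originals and which cross~$\ell$ slightly rather than being tangent to it — a perturbation not literally covered by the definition of a stable pinning. I expect the resolution to be one of two things: either carry a mildly strengthened notion of stable pinning through the induction (one that tolerates balls meeting~$\ell$, and small radius changes, since in the projection-along-$\ell$ picture only the normal direction of a ball matters), or bypass the product entirely by a direct construction in~$\R^d$ — exhibit finitely many unit vectors $u_i\in\s^{d-2}$ and scalars $\tau_i\in\R$ (for instance, $u_i$ running twice over the vertices of a regular simplex, with two well-separated positions per vertex) such that the vectors $(u_i,\tau_iu_i)$ positively span $\R^{2(d-1)}$, let $B_i$ be the radius-$r$ ball tangent to~$\ell$ with outward normal $u_i$ at position~$\tau_i$, and check that for~$r$ small and the~$\tau_i$ scaled apart the resulting disjoint family is a pinning whose defining inequalities are, to first order, the halfspaces dual to the $(u_i,\tau_iu_i)$, so that positive spanning gives pinning and its openness under perturbation (with non-generic lines handled as in Lemma~\ref{lem:CP-pinning}) gives stability. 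In either approach the crux is the same convex-geometric fact — that positively spanning configurations exist inside the image of $(u,\tau)\mapsto(u,\tau u)$ — while the surrounding pinning and stability arguments are soft.
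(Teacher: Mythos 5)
There is a genuine gap, and you have put your finger on it yourself: what Theorem~\ref{thm:exists-stable-dD} asserts is a \emph{stable} pinning, and your product construction only establishes the pinning part (that argument, via the two orthogonal projections $\pi_1,\pi_2$ and $\pi_1^{-1}(\ell)\cap\pi_2^{-1}(\ell)=\ell$, is fine). Stability is left to one of two unexecuted patches. Patch (a) -- carrying through the induction a strengthened stability that ``tolerates balls meeting $\ell$'' -- does not survive scrutiny: if a perturbation moves a center of the first family out of $\R^{d_1}$ by $\eps$, the projected ball crosses $\ell$ to depth $\approx\eps^2/2r$, and once a ball properly crosses $\ell$ its constraint becomes vacuous for all lines in a whole neighborhood of $\ell$ in line space, so the projected family must pin $\ell$ using the remaining balls alone -- which it typically cannot. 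Already your own planar base case fails this strengthened property: push the disk tangent at $x=0$ across $\ell$ by depth $\delta>0$ and every line $y=b$ with $-\delta<b\le 0$ is a transversal, so $\ell$ is no longer isolated. Hence the induction hypothesis you would need is simply not available, and the product route would require a genuinely quantitative argument (tracking that the spurious transversals of the projection live within $O(\eps^2)$ of $\ell$), which you do not supply.

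Patch (b), the direct construction, can in fact be made to work, but the crux is not where you place it. The existence of positively spanning configurations in the image of $(n,\tau)\mapsto(n,\tau n)$ is the easy part (your simplex-directions-at-two-heights example does positively span $\R^{2d-2}$). What actually needs proof is the claim you label ``soft'': that for a ball of radius $r$ tangent to $\ell$ the transversality condition relaxes the screen halfspace of Lemma~\ref{lem:LLS-halfspace} only by a term quadratic in the distance (in $\lls$) from $\ell$, so that positive spanning of the normals $\Phi(\lambda_i,n_i)$ forces $\ell$ to be isolated, robustly under perturbation. That second-order estimate is the whole content of the sufficiency direction -- the linearized (screen) condition by itself is only used in the paper for the \emph{necessary} direction (Lemma~\ref{lem:not-pinning}) -- and you neither state it precisely nor prove it; ``handled as in Lemma~\ref{lem:CP-pinning}'' does not apply, since that lemma's mechanism is the order-based pinning-pattern argument, not a curvature estimate. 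For comparison, the paper proves stability by an entirely different mechanism: $\sigma_5$-patterns give stable pinnings inside every $3$-flat through $\ell$, compactness of the space of such $3$-flats yields a finite family, and the connectedness of same-order transversals reduces any would-be transversal of a perturbed family to a contradiction inside one $3$-flat. So as written your proposal proves a weaker statement (a pinning, not a stable one), hedges between two completions, and the one completable route rests on an unproven quantitative step.
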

\begin{proof}
  Let $\ell$ be the $x_{d}$-axis in $\R^d$ and let $\Gamma$ be the
  space of all three-dimensional flats containing~$\ell$.  The natural
  homeomorphism between $\Gamma$ and the space of two-dimensional
  linear subspaces of $\R^{d-1}$ implies that $\Gamma$ is compact.

  For every $T \in \Gamma$ we can construct a quintuple $Q_T$ of
  disjoint balls in $\R^d$ tangent to~$\ell$ such that their
  restriction to~$T$ projects along~$\ell$ to a $\sigma_{5}$-pattern.
  By construction, $Q_T$ pins~$\ell$ in~$T$.  By continuity, there
  exists a neighborhood $\N_T$ of $T$ in $\Gamma$ such that $Q_T$ pins
  $\ell$ in any $T' \in \N_T$.  The union of all $\N_T$
  covers~$\Gamma$. Since $\Gamma$ is compact, there exists a finite
  sub-family $\{T_1, \ldots, T_n\}$ such that the union of the
  $\N_{T_i}$ cover~$\Gamma$. Let $\col$ denote the union of
  the~$Q_{T_i}$.

  By construction, $\col$ is a finite collection of balls such that
  the intersection of $\col$ with any 3-flat $T \in \Gamma$ is a
  stable pinning of~$\ell$ in~$T$.  Let $\eps > 0$ be such that any
  collection $\col'$ obtained by perturbing $\col$ by at most~$\eps$
  remains a pinning of $\ell$ in each $T \in \Gamma$. If such a
  perturbation $\col'$ of $\col$ does not pin $\ell$ then there is
  another transversal $\ell'$ of $\col'$ in $\R^{d}$ with the same
  order. There is a three-dimensional affine subspace $T$ containing
  both $\ell$ and~$\ell'$.  Since the set of line transversals with a
  fixed ordering on family of disjoint balls is
  connected~\cite{cghp-hhtdus-08}, this implies that $\ell$ is not
  pinned by $\col' \cap T$ in~$T$, and since $T \in \Gamma$ this is a
  contradiction. Thus, any such perturbation $\col'$ pins $\ell$
  in~$\R^{d}$, implying that $\col$ is a stable pinning.
 
 
  Finally, we observe that we can replace a ball $B \in \col$ touching
  $\ell$ in point~$p$ by moving the center of $B$ on the segment
  towards~$p$.  Since this does not change the halfplane pattern in
  the projection, $\col$ remains a stable pinning, and so we can
  choose $\col$ to consist of pairwise disjoint congruent balls.
\end{proof}
By further shrinking the balls, we could even enforce that any two are
separated by a hyperplane orthogonal to~$\ell$.

\section{The size of stable pinnings}

In this section, we will show that families of $k < 2d-1$ balls cannot
be stable pinnings of a line in~$\R^{d}$.  Instead of balls, we will
work with simpler objects we call screens (half-hyperplanes orthogonal
to the line to be pinned), and the lower bound we obtain will carry
over to balls.

\paragraph{Screens and lines.}

Let $\ell$ be the positively oriented $x_{d}$-axis in~$\R^{d}$.
For $\lambda \in \R$ and direction vector $n \in \s^{d-2}$, consider
the set 
\[
\screen(\lambda, n) := \{(x,\lambda) \in \R^{d}
\mid x \in \R^{d-1}, \; \dotp{n,x} \leq 0\},
\]
where the notation $(a,b)$ denotes a vector whose coordinates are
obtained as the concatenation of the coordinates of the vectors~$a$
and~$b$.

We call $\screen(\lambda, n)$ a \emph{screen}. A screen is a
$(d-1)$-dimensional halfspace of a hyperplane orthogonal to~$\ell$;
the screen $\screen(\lambda,n)$ is tangent to~$\ell$ in the
point~$(0,\lambda)$.  We identify $\SCREENS = \R \times \s^{d-2}$ with
the space of all possible screens.

Consider now the space $\lls$ of lines not orthogonal to~$\ell$.  Any
line in~$\lls$ must intersect the planes $x_{d} = 0$ and $x_{d} = 1$.
We identify $\lls$ with $\R^{2d-2}$ by identifying the line meeting
the points $(u_{0}, 0)$ and $(u_{1}, 1)$ with the point $(u_{0},
u_{1}) \in \R^{2d-2}$.

For $(\lambda, n) \in \SCREENS$, let $H(\lambda,n)\subset \lls$ denote
the set of those lines $g \in \lls$ that
intersect~$\screen(\lambda,n)$.
\begin{lemma}
  \label{lem:LLS-halfspace}
  For $(\lambda, n) \in \SCREENS$, the set $H(\lambda, n)$ is the
  halfspace of $\R^{2d-2}$ through the origin with outer normal
  $\Phi(\lambda, n) := ((1-\lambda){n},\lambda{n})$.
\end{lemma}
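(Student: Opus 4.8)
The plan is to compute directly the set of lines in $\lls$ that meet a given screen, using the identification of $\lls$ with $\R^{2d-2}$ described just above the statement. First I would take a line $g \in \lls$ corresponding to the point $(u_0, u_1) \in \R^{2d-2}$, i.e.\ the line through $(u_0, 0)$ and $(u_1, 1)$, and write down the parametrization $t \mapsto ((1-t)u_0 + t u_1,\, t)$ for $t \in \R$. This line meets the hyperplane $x_d = \lambda$ at exactly the value $t = \lambda$, so the unique candidate intersection point with the screen's hyperplane is $p = ((1-\lambda)u_0 + \lambda u_1,\, \lambda)$. Hence $g \in H(\lambda, n)$ if and only if this point $p$ lies in $\screen(\lambda,n)$, which by definition of the screen means $\dotp{n,\, (1-\lambda)u_0 + \lambda u_1} \leq 0$.

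Next I would rewrite that inequality as a linear functional of $(u_0, u_1)$. Bilinearity of the inner product on $\R^{d-1}$ gives $\dotp{n,(1-\lambda)u_0 + \lambda u_1} = (1-\lambda)\dotp{n,u_0} + \lambda\dotp{n,u_1} = \dotp{((1-\lambda)n, \lambda n),\, (u_0,u_1)}$, where the last inner product is the standard one on $\R^{2d-2} = \R^{d-1} \times \R^{d-1}$. So the condition $g \in H(\lambda,n)$ is exactly $\dotp{\Phi(\lambda,n),\, (u_0,u_1)} \leq 0$ with $\Phi(\lambda,n) = ((1-\lambda)n, \lambda n)$, which is precisely the closed halfspace through the origin of $\R^{2d-2}$ with outer normal $\Phi(\lambda,n)$. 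To finish I should note that $\Phi(\lambda,n)$ is nonzero (since $n \in \s^{d-2}$ is a unit vector, at least one of the blocks $(1-\lambda)n$, $\lambda n$ is nonzero), so this genuinely is a halfspace and not all of $\R^{2d-2}$.

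There is essentially no obstacle here: the only thing to be careful about is the edge case of lines that miss the relevant hyperplane, but since $g$ is not orthogonal to $\ell$ it does meet $x_d = \lambda$ for every $\lambda$, so the candidate point always exists and is unique, and the argument is a clean substitution. If anything, the one point worth stating explicitly is that we are using the closed halfplane definition of the screen ($\dotp{n,x} \leq 0$) so that $H(\lambda,n)$ is closed; this will matter later when we reason about boundaries and pinnings.
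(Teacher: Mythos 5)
Your proof is correct and follows essentially the same route as the paper: compute the intersection of the line $(u_0,u_1)$ with the hyperplane $x_d=\lambda$, substitute into the screen condition, and rewrite $\dotp{n,(1-\lambda)u_0+\lambda u_1}$ as $\dotp{((1-\lambda)n,\lambda n),(u_0,u_1)}$. The additional remarks (that $\Phi(\lambda,n)\neq 0$ and that every line in $\lls$ meets $x_d=\lambda$) are fine but not needed beyond what the paper states.
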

\begin{proof}
  The line $(u_0,u_1) \in \lls$ intersects the hyperplane $x_{d} =
  \lambda$ in the point $((1-\lambda)u_{0} + \lambda u_{1},
  \lambda)$.  This point lies in 
  $\screen(\lambda, n)$ if and only if
  \[
  \dotp{n,(1-\lambda)u_{0} + \lambda u_{1}} \leq 0,
  \]
  and since
  \[
  \dotp{n,(1-\lambda)u_{0} + \lambda u_{1}} = 
  \dotp{(1-\lambda)n,u_{0}} + \dotp{\lambda n, u_{1}} =
  \dotp{((1-\lambda)n, \lambda n), (u_{0}, u_{1})}
  \]
  the lemma follows.
\end{proof}

Let $\NORMALS \subset \R^{2d-2}$ denote the set of vectors
$\Phi(\lambda, n)$, for some $(\lambda, n) \in \SCREENS$. The function
$\Phi$ is a bicontinuous bijection from $\SCREENS$ to $\NORMALS$, and
so $\SCREENS$ and $\NORMALS$ are homeomorphic. In particular,
$\NORMALS$ is locally homeomorphic to $\R^{d-1}$, and so $\NORMALS$ is
a $(d-1)$-dimensional manifold in~$\R^{2d-2}$.  We need to argue that
it is nowhere contained in a hyperplane, that is, that there is no
neighborhood of a point in $\NORMALS$ that is contained in a
hyperplane.
\begin{lemma}
  \label{lem:linear-independance}
  $\NORMALS$ is nowhere locally contained in a hyperplane of $\lls$.
\end{lemma}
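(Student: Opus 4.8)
The plan is to show that $\NORMALS$, the image of $\Phi$, spans $\R^{2d-2}$ even when restricted to an arbitrarily small neighborhood of any chosen point. Since $\NORMALS$ is a $(d-1)$-dimensional manifold, it cannot be locally contained in a \emph{linear} hyperplane for trivial dimension reasons unless $d-1 \geq 2d-2$, so the real content is about affine hyperplanes not through the origin; I would fix a point $\Phi(\lambda_0, n_0) \in \NORMALS$ and a small neighborhood $U \subset \SCREENS$ of $(\lambda_0, n_0)$, and show that the vectors $\{\Phi(\lambda,n) : (\lambda,n) \in U\}$ are not all contained in any affine hyperplane of $\R^{2d-2}$.

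First I would pick a convenient parametrization of a small patch of $\SCREENS$ near $(\lambda_0,n_0)$: vary $\lambda$ over a small interval around $\lambda_0$, and vary $n$ over a small coordinate patch of $\s^{d-2}$. Then $\Phi(\lambda,n) = ((1-\lambda)n, \lambda n)$ is a smooth map into $\R^{2d-2} = \R^{d-1}\times\R^{d-1}$, and the claim is equivalent to saying that the affine span of the image of this patch is all of $\R^{2d-2}$. A clean way to do this is to compute enough partial derivatives at the base point: the tangent vectors $\partial_\lambda \Phi = (-n, n)$ and $\partial_{n_i}\Phi = ((1-\lambda)e_i, \lambda e_i)$ for the coordinate directions $e_i$ of $\R^{d-1}$, and then second-order data, or simply evaluate $\Phi$ at enough nearby points. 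Concretely, fixing a value of $\lambda$ and letting $n$ range over a small patch of the sphere shows that $\NORMALS$ locally contains a scaled copy of (a patch of) the diagonal-type set $\{((1-\lambda)n,\lambda n)\}$; varying $\lambda$ as well sweeps out a genuinely $2$-parameter-in-the-relevant-sense family. The cleanest approach: suppose for contradiction that there is a linear form $(a,b)\in\R^{d-1}\times\R^{d-1}$ and a constant $c$ with $\dotp{a,(1-\lambda)n} + \dotp{b,\lambda n} = c$, i.e. $\dotp{(1-\lambda)a + \lambda b, n} = c$, for all $(\lambda,n)$ in the patch. For each fixed $\lambda$, this says the vector $v(\lambda) := (1-\lambda)a+\lambda b$ has constant inner product $c$ with every $n$ in an open patch of $\s^{d-2}$; since such $n$ positively span $\R^{d-1}$ (for $d\geq 2$, an open patch of the sphere is not contained in any hyperplane), this forces $v(\lambda)=0$ and $c=0$ whenever $d\geq 2$. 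But $v(\lambda) = a + \lambda(b-a)$ vanishing for all $\lambda$ in an interval forces $a = b = 0$, so the linear form is trivial — contradiction.

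Thus the only step requiring care is the reduction: the condition ``$\NORMALS\cap U$ lies in a hyperplane $\{\dotp{\Phi(\lambda,n),w} = c\}$'' rewrites, via Lemma~\ref{lem:LLS-halfspace}'s formula $\Phi(\lambda,n)=((1-\lambda)n,\lambda n)$ and writing $w=(a,b)$, as $\dotp{(1-\lambda)a+\lambda b,\,n}=c$ on the patch, and then one invokes twice the elementary fact that a nonempty open subset of $\s^{k}$ (here $k=d-2$, with $d\geq 2$ so $k\geq 0$; for $k=0$ the sphere is two points and one still gets what is needed) affinely spans $\R^{k+1}$, first in the $n$-variable to kill $v(\lambda)$ and $c$, then in the $\lambda$-variable to kill $a$ and $b$. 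I expect the only mild obstacle is bookkeeping the $d=2$ edge case, where $d-2=0$ and $\s^{d-2}=\s^0=\{\pm 1\}$, so $\SCREENS\cong\R\times\{\pm1\}$: here $\NORMALS$ is a pair of line segments in $\R^2$ of the form $\{((1-\lambda)\eta,\lambda\eta):\lambda\in\R\}$ for $\eta=\pm1$, each of which is a line through the origin with direction depending on nothing — wait, these are genuine $1$-dimensional arcs, and a small arc of a line is of course not contained in any \emph{other} line but is contained in its own spanning line; however the manifold $\NORMALS$ near such a point is exactly that segment, whose affine span is a line $=$ hyperplane in $\R^2$, so the statement needs $\NORMALS$ to mean the union over both sheets, or the lemma is really only asserting non-degeneracy in the relevant sense for the subsequent argument. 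I would handle this by noting that for $d=2$ the two components of $\NORMALS$ together are not coplanar issues do not arise since $\R^2$ itself is the ambient hyperplane-free space in the trivial sense; more carefully, I would state the argument for $d\geq 3$ where $\dim\NORMALS=d-1\geq 2$ makes the $n$-patch genuinely $(d-1)$-dimensional-spanning in its $\R^{d-1}$ slot, and check that the $d=2$ case is either not needed for the main theorem's range ($d\geq 3$ in Theorem~\ref{main:lower-bound}) or follows by direct inspection.
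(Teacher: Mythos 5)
Your argument is correct for $d\geq 3$ and is essentially the paper's own proof: substitute $\Phi(\lambda,n)=((1-\lambda)n,\lambda n)$ into a putative hyperplane equation $\dotp{(a,b),(u_0,u_1)}=c$, use that an open patch of $\s^{d-2}$ is not contained in a hyperplane, and use linearity in $\lambda$ --- you merely eliminate in the opposite order ($n$ first at fixed $\lambda$, then $\lambda$; the paper perturbs $\lambda$ first to conclude $a=b$, then varies $n$). Your $d=2$ caveat is apt (the paper's proof tacitly needs $d\geq 3$ for the same reason, and downstream only avoidance of linear spans is required), whereas the asides about ``trivial dimension reasons'' and the patch ``positively spanning'' $\R^{d-1}$ are inaccurate but not load-bearing, since the displayed argument already covers the case $c=0$.
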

\begin{proof}
  We assume, by way of contradiction, that $\NORMALS$ is contained in
  a hyperplane in a neighborhood of the point $\Phi(\lambda, n) =
  ((1-\lambda)n, \lambda n)$.  Let this hyperplane be
  $\dotp{(a,b),(u_{0},u_{1}))} = c$, where $a,b \in \R^{d-1}$ and $c
  \in \R$.  This means that for $\eps \in \R$ sufficiently small and
  $\eta \in \s^{d-2}$ sufficiently close to~$n$,
  \[
  \dotp{(a,b),((1-\lambda -\eps)\eta, (\lambda+\eps) \eta)} = c.
  \]
  Separating out the terms with $\eps$, we obtain
  \[
  \dotp{(a,b),((1-\lambda)\eta,\lambda\eta} + \eps \dotp{b-a,\eta} = c.
  \]
  Since this holds for any $\eps$ small enough, we must
  have~$\dotp{b-a, \eta} = 0$. Since no neighborhood on $\s^{d-2}$ can
  lie in a hyperplane, it follows that~$b = a$. We thus have
  \[
  \dotp{a,(1-\lambda)\eta + \lambda\eta} = c,
  \]
  which implies $\dotp{a, \eta} = c$.  Again, no neighborhood on
  $\s^{d-2}$ lies in a hyperplane, a contradiction.
\end{proof}

\paragraph{Strict transversals to screens.}

Given a family $\F\subset \SCREENS$ of $k$~screens, a line $g \in
\lls$ is a \emph{strict transversal} of $\F$ if it meets the relative
interior of each screen.  Recall that the line $\ell$ meets every
screen of~$\F$, but since $\ell$ only touches their boundary, it is
not a strict transversal.  If $g \in\lls$ is a strict transversal of
$\F$, then any line $g'\in\lls$ sufficiently close to~$g$ must also be
a strict transversal.  Indeed, $\F$ has a strict transversal if and
only if the intersection of the halfspaces $H(\lambda,n)$ for
$\screen(\lambda,n) \in \F$ has non-empty interior.

\begin{lemma}
  \label{lem:dependent-normals}
  Let $\F$ be a family of $k \leq 2d-2$ screens.  If $\F$ has no
  strict transversal, then the $k$~normals $\Phi(\lambda,n)$, for
  $(\lambda,n) \in \F$, are linearly dependent.
\end{lemma}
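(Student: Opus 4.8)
The plan is to argue by contradiction: suppose the $k$ normals $\Phi(\lambda_i, n_i)$, $i = 1, \dots, k$, are linearly independent. Since $k \le 2d-2 = \dim \lls$, the halfspaces $H(\lambda_i, n_i)$ are bounded by hyperplanes through the origin whose normals are linearly independent, and I claim that a family of halfspaces through the origin in $\R^m$ with linearly independent outer normals always has a common point in the interior of their intersection, i.e.\ a point $p$ with $\dotp{\Phi(\lambda_i,n_i), p} < 0$ for every $i$. Such a $p$ is exactly a strict transversal, contradicting the hypothesis. So the whole lemma reduces to this elementary fact about halfspaces.

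To establish that fact, I would extend $\{\Phi(\lambda_i,n_i)\}_{i=1}^k$ to a basis $v_1, \dots, v_m$ of $\R^m$ (possible precisely because the normals are independent), take the dual basis $v_1^*, \dots, v_m^*$, and set $p = -\sum_{i=1}^k v_i^*$. Then $\dotp{\Phi(\lambda_i,n_i), p} = \dotp{v_i, p} = -1 < 0$ for each $i \le k$, as desired. Hence $p$ lies in the relative interior of every screen's line set, so $p$ (viewed as a line in $\lls$) is a strict transversal of $\F$.

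Alternatively — and this is perhaps cleaner for the write-up — I can invoke the finite-dimensional Gordan/Stiemke alternative: either there is a vector $p$ with $\dotp{\Phi(\lambda_i,n_i), p} < 0$ for all $i$ (a strict transversal), or there are nonnegative coefficients $c_i$, not all zero, with $\sum_i c_i \Phi(\lambda_i, n_i) = 0$ (a linear dependence). Since $\F$ has no strict transversal, the first alternative fails, so the second holds, giving in particular a nontrivial linear dependence among the normals — which is what we wanted.

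The main (and essentially only) obstacle is simply making sure the translation between the geometry and the linear algebra is airtight: one needs that "no strict transversal" is equivalent to "the open halfspaces $\{p : \dotp{\Phi(\lambda_i,n_i),p}<0\}$ have empty common intersection," which is exactly the remark made just before the lemma (a strict transversal exists iff $\bigcap_i H(\lambda_i,n_i)$ has nonempty interior, and since each $H(\lambda_i,n_i)$ passes through the origin, nonempty interior of the intersection is equivalent to the open-halfspace intersection being nonempty). Granting that, the argument is a two-line application of linear algebra. I'd write it using the dual-basis construction above, since it is self-contained and avoids citing a separation theorem.
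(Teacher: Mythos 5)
Your argument is correct and follows essentially the same route as the paper: reduce ``no strict transversal'' to the intersection of the halfspaces $H(\lambda,n)$ having empty interior, and then note that $k \leq 2d-2$ halfspaces through the origin in $\R^{2d-2}$ with linearly independent outer normals always intersect with nonempty interior. The only difference is that you spell out (via the dual-basis construction, or Gordan's alternative) the elementary linear-algebra fact that the paper states without proof, which is a fine addition but not a different approach.
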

\begin{proof}
  If $\F$ has no strict transversal, then the intersection of the
  halfspaces $H(\lambda,n)$ for $(\lambda,n) \in \F$ has empty
  interior.  However, the intersection of $k \leq 2d-2$~halfspaces
  through the origin in $\R^{2d-2}$ can have empty interior only if
  the outward normals of the halfspaces are linearly dependent.  It
  follows that the normals $\Phi(\lambda,n)$, for $(\lambda,n) \in
  \F$, are linearly dependent.
\end{proof}

We can represent\footnote{Note that this is not a bijection as not every point in $\SCREENS^{m}$ represents a set of \emph{distinct} screens.} a family of $m$~screens as a point in $\SCREENS^{m}$.
Let $\NOST_{m} \subset \SCREENS^{m}$ be the space of those families
$\F$ of $m$~screens that have a subfamily $\F' \subseteq \F$ of at
most $2d-2$~screens with \emph{no strict transversal}.
\begin{lemma}
  \label{lem:generic-halfhyperplanes}
  $\NOST_{m}\subset \SCREENS^{m}$ has empty interior.
\end{lemma}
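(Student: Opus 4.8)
The plan is to show that the "bad" set $\NOST_m$ is nowhere dense by exhibiting, in every neighborhood of every point of $\SCREENS^m$, a family of $m$ screens none of whose $\le 2d-2$-subfamilies fails to have a strict transversal. By Lemma~\ref{lem:dependent-normals}, a subfamily $\F'$ of size $k\le 2d-2$ with no strict transversal forces the $k$ normals $\Phi(\lambda,n)$, for $(\lambda,n)\in\F'$, to be linearly dependent. So it suffices to show: given any $m$ screens, we can perturb them arbitrarily little so that \emph{every} subset of at most $2d-2$ of the resulting normals $\Phi(\lambda_i,n_i)$ is linearly independent — i.e., the $m$ points $\Phi(\lambda_i,n_i)$ are in "general position" in $\R^{2d-2}$ in the sense that no $2d-2$ of them are linearly dependent. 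Note that once no $2d-2$-subfamily has linearly dependent normals, in particular no $k$-subfamily for $k\le 2d-2$ does either (a linearly independent set stays independent under passing to subsets), so such a perturbed family lies outside $\NOST_m$; this shows $\NOST_m$ has empty interior.

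The key tool is Lemma~\ref{lem:linear-independance}: $\NORMALS$ is a $(d-1)$-manifold that is nowhere locally contained in a hyperplane of $\R^{2d-2}$. I would build the perturbed family one screen at a time. Suppose $\Phi(\lambda_1,n_1),\dots,\Phi(\lambda_{i-1},n_{i-1})$ have already been chosen (close to the originals) so that no $2d-2$ of them are linearly dependent; I now want to choose $(\lambda_i,n_i)$ near the $i$-th original screen so that the same holds for the first $i$ normals. The only way this can fail is if $\Phi(\lambda_i,n_i)$ lies in the linear span of some collection of at most $2d-3$ of the previously chosen normals together with — no, more carefully: a dependency among $2d-2$ of the first $i$ normals that involves the new one means $\Phi(\lambda_i,n_i)$ lies in the span of some $\le 2d-3$ of the earlier normals. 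Each such span is a proper linear subspace of $\R^{2d-2}$ (of dimension $\le 2d-3$), hence is contained in some hyperplane through the origin. There are only finitely many such subspaces (one for each of the finitely many choices of $\le 2d-3$ indices among $\{1,\dots,i-1\}$), so their union $U$ is a finite union of hyperplanes. By Lemma~\ref{lem:linear-independance}, the portion of $\NORMALS$ near $\Phi$ of the $i$-th original screen is not contained in any hyperplane, hence not in the finite union $U$; therefore arbitrarily close to the $i$-th original screen there is a choice of $(\lambda_i,n_i)$ with $\Phi(\lambda_i,n_i)\notin U$. Using $\Phi$'s bicontinuity, this is a small perturbation of the $i$-th screen, and by construction no $2d-2$ of the first $i$ normals are now linearly dependent. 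Iterating through $i=1,\dots,m$ and taking the perturbations small enough at each stage (finitely many stages) produces the desired family in any prescribed neighborhood of the original $m$-tuple.

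The main obstacle is the bookkeeping in the inductive step: one has to be careful that "no $2d-2$ linearly dependent normals among the first $i$" is exactly the condition that avoids every $\le 2d-2$-subfamily with no strict transversal, and that the set of forbidden positions for the new normal is genuinely a \emph{finite} union of proper subspaces (this uses that we only need to rule out dependencies of size $\le 2d-2$, so the spans in question have dimension $\le 2d-3 < 2d-2$ and are proper). Given that, the non-degeneracy input is entirely supplied by Lemma~\ref{lem:linear-independance}, and the rest is a routine finite induction on $m$ with nested neighborhoods. One should also remark (as the paper's footnote hints) that the map $\SCREENS^m\to\NORMALS^m$ need not be injective, but this causes no trouble: we are only choosing points of $\NORMALS$ and pulling them back, and distinct screens automatically have distinct normals since $\Phi$ is a bijection onto $\NORMALS$.
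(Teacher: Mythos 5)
Your proposal is correct and follows essentially the same route as the paper's own proof: an element-by-element perturbation in which the forbidden positions for the next normal form a finite union of proper linear subspaces (spans of at most $2d-3$ already-fixed normals), avoided using Lemma~\ref{lem:linear-independance}, with Lemma~\ref{lem:dependent-normals} supplying the reduction from ``no strict transversal'' to linear dependence. The only step you pass over quickly---that not being locally contained in a single hyperplane rules out containment in a \emph{finite union} of hyperplanes, which uses that $\NORMALS$ is a $(d-1)$-manifold---is treated just as briefly in the paper itself.
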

\begin{proof}
Let $\F$ be a family in $\NOST_{m}$. We perturb $\F$, element by
element, into a family $\F'$ with no subset of at most $2d-2$ screens
with linearly dependent vectors. The first element of $\F$ need not be
changed. Assume we already perturbed the first $i$ elements of
$\F$. Every subset of at most $2d-3$ among these $i$ already fixed
normals span a linear subspace of~$\lls$. By
Lemma~\ref{lem:linear-independance}, $\NORMALS$ lies nowhere locally
inside a hyperplane or, since it is a $d-1$-manifold, locally inside a
finite union of hyperplanes. Thus we can choose the $(i+1)^{th}$ element
outside of each of these subspaces, and by induction obtain the
desired perturbation of $\F$.
\end{proof}

\paragraph{A necessary condition for pinning}

Consider now a collection $\col$ of balls tangent to the line~$\ell$
(still assumed to be the $x_{d}$-axis).  If $(p,\lambda)$ is the
center of ball $B \in \col$, we consider the screen $\screen(B) =
\screen(\lambda, -p/||p||)$.  This screen touches $\ell$ in the same
point that $B$ does, and its boundary is contained in the tangent
hyperplane to $B$ at this point.
\begin{lemma}\label{lem:not-pinning}
  Let $\col$ be a collection of balls tangent to~$\ell$.  If the
  family of screens $\F := \{\screen(B)\mid B \in \col\}$ has a strict
  transversal, then $\col$ does not pin~$\ell$.
\end{lemma}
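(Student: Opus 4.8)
The plan is to show that if the screen family $\F=\{\screen(B)\mid B\in\col\}$ has a strict transversal $g$, then $\ell$ is not an isolated point of $\T(\col)$, i.e.\ there are line transversals to $\col$ arbitrarily close to but distinct from $\ell$. The geometric idea is that the screen $\screen(B)$ is a ``linear model'' of the ball $B$ near the tangency point: since $B$ is tangent to $\ell$ at the point $(p,\lambda)$ lying on the plane $x_d=\lambda$, and $\screen(B)=\screen(\lambda,-p/\|p\|)$ is the halfspace of that plane bounded by the tangent hyperplane to $B$ there, the set of lines close to $\ell$ meeting $B$ agrees, to first order, with the set of lines close to $\ell$ meeting $\screen(B)$. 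A strict transversal to the screens therefore points ``into'' every ball, and a suitable small line can be slid off $\ell$ into that direction while still crossing every ball.

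First I would set up coordinates with $\ell$ the $x_d$-axis and use the identification of $\lls$ with $\R^{2d-2}$ from the previous section, so that $\ell$ corresponds to the origin. By Lemma~\ref{lem:LLS-halfspace}, the lines in $\lls$ meeting $\screen(B)$ form the halfspace $H(\screen(B))$ with outward normal $\Phi(\screen(B))$, a halfspace whose boundary passes through the origin. Saying $\F$ has a strict transversal means (as noted in the text before Lemma~\ref{lem:dependent-normals}) that $\bigcap_{B}H(\screen(B))$ has nonempty interior; pick a line $g\in\lls$ in that interior, so $g$ lies strictly inside every halfspace $H(\screen(B))$. Now consider the segment of lines $g_t := t\,g$ for $t\in(0,1]$ in the $\R^{2d-2}$-parametrization; each $g_t$ is again in the open intersection, hence meets the relative interior of each screen $\screen(B)$, and $g_t\to\ell$ as $t\to0$.

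The remaining step is to promote ``meets the relative interior of $\screen(B)$'' to ``meets the ball $B$'' for $t$ small. Here I would argue pointwise for a single ball $B$ with center $c=(p,\lambda)$, radius $r=\|p\|$. The line $g_t$ meets the plane $x_d=\lambda$ in a point $q_t$ which, because $g_t$ is a strict transversal to $\screen(B)=\screen(\lambda,-p/r)$, satisfies $\dotp{-p/r,\,q_t-(0,\lambda)}<0$, i.e.\ $q_t$ lies strictly on the near side of the tangent hyperplane, in the direction of the center $c$. As $t\to0$, $q_t\to(0,\lambda)$, which lies on $\partial B$. A short estimate — the ball $B$ contains the intersection of the open halfspace $\{\dotp{p,x-(0,\lambda)}>0\}$ with a small neighborhood of $(0,\lambda)$, up to a quadratic error — then shows that for $t$ small enough $q_t\in B$; taking the minimum over the finitely many balls of $\col$ gives a single $t_0>0$ with $g_{t_0}$ a transversal to all of $\col$, and $g_{t_0}\neq\ell$. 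Hence $\ell$ is not isolated in $\T(\col)$, so $\col$ does not pin $\ell$.

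I expect the main obstacle to be the last step: making precise and clean the claim that strict transversality to the screen implies, for lines near $\ell$, intersection with the ball. One must be careful that $q_t$ approaches the tangency point \emph{along a direction transverse to $\partial B$} (guaranteed by the strict inequality $\dotp{p,q_t-(0,\lambda)}>0$ being bounded below by something linear in $t$, since $g$ is in the \emph{open} intersection), so that the quadratic curvature error in the boundary of $B$ is dominated. Everything else — the identification of $\lls$ with $\R^{2d-2}$, the halfspace description, the scaling $g_t=tg$ — is bookkeeping already furnished by Lemmas~\ref{lem:LLS-halfspace} and the surrounding discussion. An alternative to the explicit estimate is to note that the set of lines meeting the open ball $B$ is an open subset of $\lls$ whose closure contains $\ell$ and which, near $\ell$, contains a neighborhood of $\ell$ intersected with $H(\screen(B))^{\circ}$; this again comes down to a local (quadratic versus linear) comparison at the tangency point, so it is the same technical heart dressed differently.
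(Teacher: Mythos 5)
Your proposal is correct and follows essentially the same route as the paper: both pass to the halfspace picture in $\lls$, slide a line from the origin along a ray/segment into the interior of $\bigcap_B H(\screen(B))$, note that its trace on each hyperplane $x_d=\lambda$ lies in the relative interior of the corresponding screen, and conclude that for a sufficiently short motion the trace lies inside each ball, so $\ell$ is not isolated in $\T(\col)$. Your explicit linear-versus-quadratic estimate at the tangency point (which with the scaling $g_t=tg$ reduces to $2t\dotp{x_1,p}\geq t^2\|x_1\|^2$) just makes precise the step the paper asserts with ``if we make $\tau$ sufficiently short, the trace also lies in the interior of each ball.''
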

\begin{proof}
  If $\F$ has a strict transversal, then the halfspaces $H(\lambda,n)$
  for $(\lambda,n) \in \F$ intersect with non-empty interior.  This
  implies that there exists a segment $\tau$ in~$\lls$ with one
  endpoint at the origin and which is, except for that point,
  contained in the interior of each halfspace~$H(\lambda,n)$.

  Consider moving a line $g \in \lls$ along~$\tau$.  The trace of $g$
  on the hyperplane $x_{d} = \lambda$ is a straight segment. Since
  $\tau$ lies in the interior of~$H(\lambda,n)$, this trace lies in
  the relative interior of~$\screen(\lambda,n)$.  But this implies
  that if we make $\tau$ sufficiently short, the trace also lies in
  the interior of each ball~$B$.  It is therefore possible to move a
  line $g$, starting with $g = \ell$, while intersecting each
  ball~$B$.  It follows that $\col$ is not a pinning of~$\ell$.
\end{proof}
We now obtain the desired lower bound on the size of stable pinning
configurations of balls:
\begin{theorem}\label{theo:instable}
  Any pinning of a line by $k \leq 2d-2$ balls in $\R^d$ is instable.
\end{theorem}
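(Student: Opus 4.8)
The plan is to combine the machinery of screens with the structural results just proved. Given a pinning $(\col, \ell)$ by $k \leq 2d-2$ balls, we associate to it the family of screens $\F := \{\screen(B) \mid B \in \col\}$ as in Lemma~\ref{lem:not-pinning}. The key point is that stability of the pinning should force $\F$ to have \emph{no} strict transversal: indeed, by Lemma~\ref{lem:not-pinning}, if $\F$ had a strict transversal then $\col$ would not even pin $\ell$, let alone do so stably. So the first step is to record that a stable pinning of size $k$ yields a family of $k$ screens with no strict transversal.

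Next I would invoke Lemma~\ref{lem:dependent-normals}: since $k \leq 2d-2$ and $\F$ has no strict transversal, the normals $\Phi(\lambda, n)$ for $(\lambda, n) \in \F$ are linearly dependent in $\R^{2d-2}$. The crucial observation is that this is a \emph{non-generic} condition on the screens: by Lemma~\ref{lem:linear-independance}, $\NORMALS$ is nowhere locally contained in a hyperplane, so (as in the proof of Lemma~\ref{lem:generic-halfhyperplanes}) an arbitrarily small perturbation of the screens destroys the linear dependence, hence creates a strict transversal. The remaining—and main—obstacle is to transfer this perturbation from the screen world back to the ball world: I need that a small perturbation of the \emph{screens} can be realized by a correspondingly small perturbation of the \emph{ball centers} keeping each ball tangent to $\ell$. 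This follows because the map sending a ball tangent to $\ell$ (with center $(p,\lambda)$) to its screen $\screen(\lambda, -p/\|p\|)$ is continuous and, locally, surjective onto $\SCREENS$ up to the choice of radius: moving the center slightly within the hyperplane $x_d = \lambda$ and along $\ell$ realizes any nearby screen, and one can keep the radius fixed.

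Putting this together: if $(\col,\ell)$ were a \emph{stable} pinning by $k \leq 2d-2$ balls, then for every sufficiently small perturbation $\col'$ (with each ball still tangent to $\ell$) the collection $\col'$ would still pin $\ell$, hence by Lemma~\ref{lem:not-pinning} the screen family $\F' = \{\screen(B') \mid B' \in \col'\}$ would have no strict transversal, hence by Lemma~\ref{lem:dependent-normals} the normals $\Phi(\screen(B'))$ would be linearly dependent. But the perturbation constructed above produces a $\col'$, arbitrarily close to $\col$, whose screen normals are linearly \emph{independent}—a contradiction. Therefore no pinning of a line by $k \leq 2d-2$ balls in $\R^d$ can be stable, which is exactly the statement. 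I expect the only delicate point to be the formal verification that the screen-perturbation lies in the image of a center-perturbation of the correct (small) size; everything else is a direct chaining of the lemmas.
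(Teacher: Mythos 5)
Your proposal is correct and follows essentially the same route as the paper: pass to the family of screens, use Lemma~\ref{lem:not-pinning} to conclude they have no strict transversal, perturb away the linear dependence of the normals (you inline the argument of Lemma~\ref{lem:generic-halfhyperplanes} via Lemmas~\ref{lem:dependent-normals} and~\ref{lem:linear-independance} rather than citing it), and realize the perturbed screens by a correspondingly small motion of the ball centers keeping tangency to~$\ell$. The realization step you flag as the only delicate point is precisely the step the paper also dispatches in one sentence, and your justification of it (move a center $(p,\lambda)$ to $(-rn',\lambda')$ with the radius $r=\|p\|$ fixed) is adequate.
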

\begin{proof}
  Let $\col$ be a pinning of~$\ell$. Let $\F := \{\screen(B) \mid B
  \in \col\}$ be the corresponding family of screens.  By
  Lemma~\ref{lem:not-pinning}, $\F$ does not have a strict
  transversal, and so $\F \in \NOST_{k}$.  By
  Lemma~\ref{lem:generic-halfhyperplanes}, $\NOST_{k}$ has empty
  interior, and so we can find $\F' \in\SCREENS^{k} \setminus
  \NOST_{k}$ arbitrarily close to~$\F$.  Since $\F'$ can be realized
  as the set of screens of a perturbation of~$\col$, the theorem
  follows.
\end{proof}

\section{Consequences}

\paragraph{Lower bound for minimal pinnings.}
Theorem~\ref{thm:lower-bound} follows immediately from 
Theorem~\ref{thm:exists-stable-dD} and the following lemma:
\begin{lemma}\label{lem:exists-stable-minimal}
  If $\col$ is a stable pinning of a line by finitely many balls in
  $\R^d$, then there exist minimal pinnings of $\ell$ by $2d-1$ balls
  arbitrarily close to some subset of $\col$ of size~$2d-1$.
\end{lemma}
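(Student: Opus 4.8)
The plan is to start from the stable pinning $\col$ and keep deleting balls as long as the line $\ell$ stays pinned, until we reach a minimal pinning subfamily $\G\subseteq\col$; by the known upper bound on minimal pinnings of disjoint balls, $|\G|\le 2d-1$. The point of invoking \emph{stability} is to guarantee that $|\G|$ is in fact exactly $2d-1$: if $\G$ had size $k\le 2d-2$, then $\G$ would be a pinning of $\ell$ by $k\le 2d-2$ balls, which Theorem~\ref{theo:instable} forbids for a \emph{stable} pinning. (One subtlety to address: Theorem~\ref{theo:instable} says any pinning by $\le 2d-2$ balls is unstable, but $\col$ is the stable object, not $\G$; so I first need the easy observation that a subfamily of a stable pinning is again stable, or rather that if $\col$ is stable and $\G\subseteq\col$ pins $\ell$, then perturbing the balls of $\G$ within $\col$ and leaving the rest of $\col$ fixed shows $\G$ cannot pin $\ell$ — hence any minimal sub-pinning has size $\ge 2d-1$, and combined with the upper bound, size exactly $2d-1$.)

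Once we have a minimal pinning $\G$ of size exactly $2d-1$, it need not yet be a pinning by \emph{disjoint} balls arbitrarily close to a subset of $\col$ in the strong sense required — but it already is a subset of $\col$, so disjointness is inherited, and it already pins $\ell$ minimally. So at this stage we have produced a genuine minimal pinning by $2d-1$ disjoint balls. The phrase ``arbitrarily close to some subset of $\col$ of size $2d-1$'' is then satisfied trivially by $\G$ itself (distance $0$), so in fact the lemma as stated may require no perturbation at all; I would state it that way to keep the argument clean. If instead the intended reading is that one wants a \emph{congruent} minimal pinning or one in ``general position'' (e.g.\ with the transversal unique and not just isolated, or with the balls in convex-position-type genericity), then the second step is to perturb $\G$ slightly: stability of $\col$ — hence of $\G$ — means every sufficiently small perturbation of $\G$ (keeping tangency to $\ell$) still pins $\ell$, and we may choose the perturbation generic, then re-delete any now-redundant ball; but by the same size argument no ball can be deleted, so the perturbed family is again a minimal pinning of size $2d-1$, now arbitrarily close to $\G\subseteq\col$.

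The main obstacle is the bookkeeping around \emph{which} family is stable and \emph{which} is minimal: stability is a property of $\col$, minimality forces us to pass to a subfamily $\G$, and Theorem~\ref{theo:instable} must be applied to $\G$, so I need the lemma ``a sub-family of a stable pinning that still pins the line is itself stable'' — this is immediate from the definition of stable pinning (perturb only the balls of $\G$, keeping the others fixed, and note the perturbed $\col$ still pins, hence so does its subfamily $\G'$ if $\G$ did not — contradiction), but it is the one place where care is needed. Everything else is a finite descent (delete-while-pinned) plus the cited upper bound $|\text{minimal pinning}|\le 2d-1$ and the cited fact that pinnings by $\le 2d-2$ balls cannot be stable.
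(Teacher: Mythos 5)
There is a genuine gap, and it sits exactly where you flagged the ``one place where care is needed.'' Your argument rests on the claim that a subfamily $\G\subseteq\col$ that pins $\ell$ inherits stability from $\col$ (equivalently, that a stable pinning cannot contain a pinning subfamily of size $\le 2d-2$). The justification you sketch --- perturb only the balls of $\G$, keep the rest of $\col$ fixed, note the perturbed $\col$ still pins, ``hence so does its subfamily'' --- uses the wrong monotonicity. Since $\T(\col')\subseteq\T(\G')$ when $\G'\subseteq\col'$, pinning passes \emph{up} from a subfamily to the whole family, not \emph{down}: the perturbed $\col$ being a pinning tells you nothing about whether the perturbed $\G$ pins. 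Consequently Theorem~\ref{theo:instable} applied to $\G$ (it is unstable if $|\G|\le 2d-2$) produces no contradiction with the stability of $\col$, and your conclusion that every minimal pinning subfamily of $\col$ has size exactly $2d-1$ is unproven. For the same reason, your remark that ``the lemma as stated may require no perturbation at all'' misreads the statement: the phrase ``arbitrarily close to some subset of $\col$'' is there precisely because $\col$ itself may contain pinning subfamilies of $\le 2d-2$ balls (their screens would have linearly dependent normals $\Phi(\lambda,n)$, a degenerate but perfectly possible configuration), in which case no subset of $\col$ of size $2d-1$ is a \emph{minimal} pinning.

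The paper's proof handles this by perturbing the \emph{whole} family first: writing $\F$ for the screens of the $m$ balls of $\col$, Lemma~\ref{lem:generic-halfhyperplanes} gives $\F'\in\SCREENS^{m}\setminus\NOST_{m}$ arbitrarily close to $\F$, so in the correspondingly perturbed ball family $\col'$ every subfamily of at most $2d-2$ screens has a strict transversal, whence by Lemma~\ref{lem:not-pinning} no subfamily of at most $2d-2$ balls of $\col'$ pins $\ell$. Stability of $\col$ is used only to guarantee that $\col'$ (the full perturbed family, tangent to $\ell$) still pins $\ell$; then the cited upper bound forces a minimal pinning subfamily of $\col'$ of size exactly $2d-1$. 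Your delete-while-pinned descent is fine as a way to extract a minimal pinning once one is inside such a ``generic'' family, but applied to $\col$ directly it can terminate at size $\le 2d-2$, and nothing in your argument rules that out.
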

\begin{proof}
  Let $m$ denote the number of balls in $\col$, and let $\F :=
  \{\screen(B) \mid B \in \col\}$ be the corresponding family of
  screens.  Since $\NOST_{m} \subset \SCREENS^{m}$ has empty interior,
  we can find a family $\F' \subset \SCREENS^{m} \setminus \NOST_{m}$
  arbitrarily close to~$\F$.  Let $\col'$ be the correspondingly
  perturbed family of balls.

  By definition of $\NOST_{m}$ and Lemma~\ref{lem:not-pinning}, no
  subfamily of at most $2d-2$ balls of $\col'$ is a pinning.  However,
  any minimal pinning of a line by disjoint balls in~$\R^{d}$ has size
  at most~$2d-1$~\cite{cghp-hhtdus-08}, so there must be a subfamily
  $\col'' \subset \col'$ of $2d-1$ balls that is minimally pinning.
\end{proof}

\paragraph{Helly number for transversals to disjoint unit balls.} 
Hadwiger's transversal theorem \cite{h-uegt-57} can be extended to families 
of disjoint balls in arbitrary dimension~\cite{bgp-ltdb-08,cghp-hhtdus-08}: 
a family $\F$ of disjoint balls in $\R^d$ has a line transversal if and 
only if there is an ordering on $\F$ such that every $h_d$ members have a 
line transversal consistent with that ordering. The smallest such constant 
$h_d$ is at most $2d$ and at least the size of the largest minimal pinning 
family of disjoint balls in $\R^d$; Theorem~\ref{thm:lower-bound} implies 
that this number is $2d-1$ or $2d$. Similarly, we obtain that $2d-1$ is a 
lower bound for the Helly number of the generalization of Helly's theorem 
to sets of transversals to disjoint (unit) balls.

\begin{proof}[Proof of Theorem \ref{main:lower-bound}]
The proof of Theorem~\ref{thm:exists-stable-dD} shows that there
exists a stable pinning of a line $\ell$ by finitely many disjoint
unit balls in $\R^d$ such that any two balls can be separated by a
hyperplane orthogonal to $\ell$. Lemma~\ref{lem:exists-stable-minimal}
now implies that there exists a minimal pinning $\F$ of a line $\ell$
by $2d-1$ disjoint unit balls in $\R^d$ such that any line
intersecting a subset of the balls does so in an order consistent with
the geometric permutation induced by $\ell$. Since $\F$ is a pinning
of $\ell$ by disjoint balls, $\F$ has no other transversal consistent
with the geometric permutation induced by $\ell$. The statement then
follows in the case of open balls. Reducing the radii of the balls
slightly gives a similar construction for closed balls.
\end{proof}

\section{Final remarks}

Our lower bound construction is hardly
effective, given its use of the compactness of the set of $3$-spaces
through a fixed line; actually constructing minimal pinnings of size
$2d-1$ (or any given size) seems challenging. Another natural question
is whether any of the results obtained for pinning lines by disjoint
balls extend to more general pinnings, for instance pinnings of lines
by (disjoint) convex sets. In that direction, little is known, not
even whether the size of minimal pinnings is bounded by a function
of~$d$.

\bibliographystyle{plain}
\bibliography{trans}

\end{document}